\documentclass[journal]{IEEEtran}
\IEEEoverridecommandlockouts

\usepackage{amsmath,amssymb,amsthm}
\usepackage{mathtools}
\usepackage{enumitem}
\setlist[description]{style=nextline,leftmargin=1em,labelindent=0pt,parsep=2pt,itemsep=4pt}
\usepackage[hidelinks]{hyperref}
\usepackage{url}
\usepackage{graphicx}
\usepackage{booktabs}
\usepackage{array}
\usepackage{multirow}
\usepackage{etoolbox}
\usepackage{xcolor}
\usepackage{listings}
\usepackage{microtype}
\usepackage{placeins}
\usepackage{balance}
\usepackage{xspace}
\usepackage{orcidlink}
\usepackage[ruled,linesnumbered]{algorithm2e}
\usepackage{tikz}
\usetikzlibrary{positioning,shapes.geometric,arrows.meta,fit,calc,decorations.pathreplacing,backgrounds}

\graphicspath{{./figures/}{./}}

\newcommand{\qsim}{\textsc{qsimadv}\xspace}
\newcommand{\cuq}{\textsc{cuQuantum}\xspace}
\newcommand{\custate}{\textsc{cuStateVec}\xspace}
\newcommand{\custn}{\textsc{cuTensorNet}\xspace}
\newcommand{\aer}{\textsc{Aer}\xspace}
\newcommand{\lightning}{\textsc{Lightning}\xspace}
\newcommand{\lightninggpu}{\textsc{Lightning-GPU}\xspace}
\newcommand{\lightningkokkos}{\textsc{Lightning-Kokkos}\xspace}
\newcommand{\juqcs}{\textsc{JUQCS}\xspace}
\newcommand{\qsimg}{qsim\xspace}
\newcommand{\nwqsim}{\textsc{NWQ-Sim}\xspace}
\newcommand{\openqasm}{OpenQASM\,3.0\xspace}

\newcommand{\ella}{\emph{Ella}\xspace}
\newcommand{\setonix}{\emph{Setonix}\xspace}

\newcommand{\bigO}[1]{\mathcal{O}\!\left(#1\right)}
\newcommand{\Qloc}{Q_{\text{loc}}}
\providecommand{\ket}[1]{\ensuremath{\left|#1\right\rangle}}

\newcommand{\code}[1]{\texttt{#1}}

\newcommand{\good}[1]{\textbf{#1}}

\makeatletter
\let\qsim@orig@makecaption\@makecaption
\renewcommand{\fnum@table}{Table~\thetable}
\long\def\@makecaption#1#2{%
  \ifdefstring{\@captype}{table}{%
    \vskip\abovecaptionskip
    {\normalfont\footnotesize\noindent #1. #2\par}%
    \vskip\belowcaptionskip
  }{%
    \qsim@orig@makecaption{#1}{#2}%
  }%
}
\makeatother

\newtheorem{theorem}{Theorem}
\newtheorem{lemma}{Lemma}
\newtheorem{corollary}{Corollary}

\hypersetup{
  pdftitle={QSimAdv: A Late-Bound, Vendor-Agnostic Architecture for High-Performance Quantum-Circuit Simulation},
  pdfauthor={Shusen Liu, Pascal Jahan Elahi, Wenyun Sun, Runyao Duan, Shenjin Lv, and Ugo Varetto},
  pdfsubject={High-performance exact quantum-circuit simulation},
  pdfkeywords={quantum circuit simulation, vendor-agnostic simulation, stabiliser formalism, state-vector simulation, multi-GPU, MPI, reproducibility}
}

\begin{document}

\title{QSimAdv: A Late-Bound, Vendor-Agnostic\\Architecture for High-Performance\\Quantum-Circuit Simulation}

\author{%
\IEEEauthorblockN{
Shusen Liu\orcidlink{0000-0002-0794-1132}\IEEEauthorrefmark{1}\IEEEauthorrefmark{2},
Pascal Jahan Elahi\orcidlink{0000-0002-6154-7224}\IEEEauthorrefmark{1}\IEEEauthorrefmark{2},
Wenyun Sun\orcidlink{0000-0002-2049-3960}\IEEEauthorrefmark{3},
Shenjin Lv\orcidlink{0009-0004-3744-6871}\IEEEauthorrefmark{4},
Xiaohan Shan\orcidlink{0000-0001-7468-4853} \IEEEauthorrefmark{5},
Ugo Varetto\orcidlink{0000-0002-7696-0345}\IEEEauthorrefmark{1}\IEEEauthorrefmark{2}
}

\IEEEauthorblockA{
\IEEEauthorrefmark{1}Pawsey Supercomputing Research Centre, Australia
}

\IEEEauthorblockA{
\IEEEauthorrefmark{2}The University of Western Australia, Australia
}
\IEEEauthorblockA{
\IEEEauthorrefmark{3}Nanjing University of Information Science \& Technology, China
}
\IEEEauthorblockA{
\IEEEauthorrefmark{4}Sina Weibo, China
}
\IEEEauthorblockA{
\IEEEauthorrefmark{5}Beijing Zhongke Liangshu Technology Co., Ltd, China
}

\IEEEauthorblockA{
Email: pascal.elahi@pawsey.org.au
}
}

\maketitle

\begin{abstract}
Portability in high-performance quantum-circuit simulation need not begin at the kernel. We present \qsim, which makes late binding, rather than a common kernel, the basis of vendor independence. Representation, operator lowering, and data placement are bound only when their required inputs become available. Before full-state allocation, circuit, noise, and output inspection can route eligible generic sampled-count requests to a stabiliser tableau; explicitly requested representations remain fixed. For full-state execution, backend constraints shape fusion; an ordered fused operator binds to a native lowering only after its physical targets are known. A first-class logical-to-physical layout map records non-canonical order across local and rank-address bits, so the dispatcher moves nonlocal targets only on demand. GPU, CPU, and Message Passing Interface (MPI) backends share these semantics while retaining native execution paths.

We realise this design on NVIDIA GH200 and AMD MI250X/EPYC systems across local and distributed execution. With matched complex 32-bit floating-point state storage, \qsim leads both \aer Hopper configurations at $N=32$ and \aer's HIP backend at four shared MI250X sizes from $N=24$ to 30. Strong scaling exposes platform dependence: on \setonix, \qsim leads both GPU and CPU comparisons at every measured rank, achieving $3.4\times$ and $2.8\times$ speedups, respectively, from one to eight; neither GH200 path speeds up at eight ranks. Weak scaling reaches 256 ranks with 2~TiB GPU and 1~TiB CPU states. Together, these results support the central claim: portability can reside above the kernel boundary while execution remains native and extends across distributed memory.
\end{abstract}

\begin{IEEEkeywords}
quantum circuit simulation, vendor-agnostic simulation, stabiliser formalism, state-vector simulation, multi-GPU, MPI, reproducibility
\end{IEEEkeywords}

\section{Introduction}\label{sec:intro}

Quantum-circuit simulators provide full-state and intermediate-amplitude references that quantum hardware cannot expose directly. Exact simulation is also a demanding high-performance-computing workload: an $N$-qubit state vector contains $2^N$ amplitudes, each generic gate touches the full state, and simulations beyond one device must move parts of that state across a communication fabric.

Several widely used GPU simulators obtain their fastest NVIDIA paths through the closed \cuq SDK, particularly its \custate state-vector library~\cite{bayraktar2023cuquantum}. These include Qiskit \aer's GPU backend~\cite{aleksandrowicz2019qiskit,wood2024aerprivacy}, PennyLane \lightninggpu~\cite{asadi2024lightning}, Google \qsimg-GPU~\cite{isakov2021qsim}, and CUDA-Q~\cite{brown2026multigpu}. That route is effective on NVIDIA hardware, but it has no equivalent quantum-simulation API on the AMD system evaluated here. \setonix~\cite{setonix} supplies HIP, matrix and tensor libraries, and GPU-aware MPI, yet its ROCm stack does not provide the gate, index-swap, and measurement operations exposed by \custate~\cite{rocm721release}. Translating CUDA syntax to HIP therefore leaves the simulator algorithms themselves to be implemented.

\qsim treats representation choice and hardware lowering as one late-binding problem. It owns three decisions above the vendor API: semantic routing, fused-operator execution, and the placement of logical qubits in local memory or across MPI ranks. Vendor backends implement four primitives (matrix multiplication, index permutation, sampling, and distributed exchange) without choosing the representation or defining the logical layout. This separation preserves the logical semantics across vendors while leaving each backend free to use native libraries and kernels.

The generic full-state lowering has a simple interpretation. View the $2^N$ amplitudes as an $N$-axis tensor, with one axis per qubit. A fused $k$-qubit gate acts on $k$ selected axes. Bringing those axes together turns the update into a matrix multiplication; if the axes are not moved back afterwards, a small layout map records their new positions. Specialised kernels preserve the same target and layout semantics without necessarily performing this transform. In a distributed state vector, the high index bits select the MPI rank. Moving a gate target from a rank bit to a local bit is then analogous to bringing a requested item into a cache, which connects the placement problem to offline paging.

We employ the design as \emph{late-bound} because each architectural choice is committed at the last stage with the information needed to preserve the requested semantics. Circuit inspection resolves the representation before a large state is allocated. For a request that remains on the full-state path, the fallback capability selects a backend family and its feasibility limits, including the fusion cap, before fusion. Once a fused operator and its physical targets are known, that backend selects the concrete kernel or transport. Local and distributed layout maps retain non-canonical axis orders until an operation requires materialisation.

The source tree has been evaluated on HPC \ella, a Grace Hopper GH200 system with CUDA and HPC-X OpenMPI, and HPC \setonix, an MI250X/EPYC system with ROCm and Cray MPICH. The experiments cover semantic routing, single-device CPU and GPU execution, strong scaling, and weak scaling to 256 ranks.

\subsection{Contributions}\label{sec:intro-thesis}

This paper makes four contributions.

\begin{enumerate}
\item \textbf{Semantic routing (R1).} For generic \code{dense} and \code{gpu} requests that ask for sampled counts, \qsim inspects the circuit before allocating a state vector. Eligible Clifford circuits with Pauli noise use a stabiliser tableau; explicit density-matrix, matrix-product-state (MPS), and tensor-network requests keep their chosen representation. At $N=30$, routing the noisy Greenberger--Horne--Zeilinger (GHZ) benchmark is about $78{,}000\times$ faster than general execution, and the tableau path reaches $N=1000$ in under a second.

\item \textbf{A fused-operator interface (R2).} The CPU state-vector path, the CUDA and HIP GPU state-vector paths, rank-local state-vector execution within MPI, and the density-matrix wrapper all accept the same ordered targets, operator, and layout map. The generic implementation combines an axis permutation with general matrix--matrix multiplication (GEMM), while diagonal, controlled, and small-width gates use specialised kernels. On the Hopper random-circuit sweep, \qsim overtakes \aer without \custate from $N=28$, with an \aer-to-\qsim time ratio of 2.2 at $N=32$; on the four MI250X random-brickwork cells shared by all GPU tools, at $N\in\{24,26,28,30\}$, the ratio ranges from 1.15 to 2.16.

\item \textbf{Distributed layout as explicit state (R3).} A logical-to-physical map makes deferred cross-rank placement explicit, and a proved invariant keeps it correct. For a fixed gate stream, the ideal all-frame, full-future demand model with uniform promotion costs reduces to offline paging, where B\'el\'ady's rule minimises the number of promotions. The production planner uses bounded lookahead, and the distributed interface reaches 256 ranks and a 2~TiB state.

\item \textbf{Backend lowering through a shared primitive contract (R4).} CUDA, HIP, Arm, x86, and MPI implementations preserve the same representation and layout contracts while choosing native matrix-multiplication, permutation, sampling, and exchange operations. The evaluation characterises these lowerings with a kernel decomposition on NVIDIA and a bounding host-region profile on AMD.
\end{enumerate}

\S\ref{sec:background}--\ref{sec:methods} give the background, architecture, and implementation, \S\ref{sec:eval} the evaluation, and \S\ref{sec:related}--\ref{sec:conclusion} the related work, implications, and conclusions; the appendices hold the artefact, measurement protocol, proofs, and complete tables.

\section{Background: representations, APIs, and backend primitives}\label{sec:background}

Three decisions shape \qsim's execution path: semantic routing selects a representation, the fused-operator interface defines full-state execution, and a layout map records logical-to-physical placement. Vendor backends realise these decisions through four primitives. The underlying tableau and matrix-multiplication algorithms are established methods; the contribution lies in how they are selected and joined across backends.

\paragraph{State and layout notation} A circuit $C$ contains gates, measurements, and noise operations on $N$ logical qubits; $\mathcal{N}$ denotes its per-operation noise channels, and $S$ is the number of sampled shots. A pure state is a unit vector $\ket{\psi}\in\mathbb{C}^{2^N}$. The same data can be viewed as a rank-$N$ tensor $\Psi$, with one binary axis for each qubit. Physical axis positions are numbered by the significance of their flat-index bits, starting with position 0 as the least significant. A layout map $L$ records the physical position of each logical qubit, so a permuted tensor still represents the same logical state.

For distributed execution, let $P=2^r$ be the number of MPI ranks, where $0\le r<N$. Each rank stores $\Qloc=N-r$ local index bits; the remaining $r$ high bits select the rank. The map $\pi_t$ extends $L$ across both local and rank-address bits after execution step $t$.

\paragraph{Fused gates and requests} Adjacent operations can be multiplied in circuit order to form a fused $k$-qubit gate $G\in\mathbb{C}^{2^k\times2^k}$. Its ordered target tuple is $V=(q_0,\ldots,q_{k-1})$, with $k=|V|$; $K_{\max}$ denotes the configured fusion cap. The matrix convention is little-endian within this tuple: $q_j$ supplies bit $j$ of a row or column index, so $q_0$ is the least-significant target. The generic \code{dense} and \code{gpu} entry points leave the representation open and name the fallback full-state path. Explicit density-matrix, MPS, and tensor-network requests commit their representation. If semantic routing selects the CPU-resident tableau, that representation-specific implementation takes precedence over the generic full-state fallback.

\subsection{Representation classes}\label{sec:bg-regimes}

\paragraph{State vectors} A state vector represents arbitrary pure-state circuits exactly up to floating-point error. It requires $\Theta(2^N)$ memory. Applying a generic dense fused $k$-qubit operator costs $\Theta(2^{N+k})$ arithmetic on a $2^k\times2^{N-k}$ view of the state, while diagonal and other structured operators can be applied in $\Theta(2^N)$ work. The exponential state size makes this representation the main target for hardware acceleration and distributed memory.

\paragraph{Tensor networks and MPS} Tensor-network methods avoid materialising the full state vector, but their cost depends on the sizes of intermediate tensors induced by the contraction order. In an MPS, this dependence appears through the bond dimensions needed to represent entanglement across cuts of the qubit chain~\cite{vidal2003efficient}. Such methods can be effective when those intermediate dimensions remain small. An MPS result becomes approximate when a bond cap or singular-value cutoff discards weight. \qsim includes an MPS engine and an NVIDIA-only \custn path; neither is evaluated below.

\paragraph{Stabiliser tableaux} Clifford gates map Pauli operators to Pauli operators. The Gottesman--Knill theorem therefore permits polynomial-time simulation of Clifford circuits with stabiliser preparations and Pauli measurements~\cite{gottesman1998heisenberg}; sampled Pauli noise preserves the same structure. \qsim uses the Aaronson--Gottesman tableau~\cite{aaronson2004improved}, which represents $N$ stabiliser and $N$ destabiliser generators by binary coordinates and phase bits. A non-Clifford operation leaves this representation class and requires another engine. Choosing a tableau instead of a state vector changes the asymptotic cost, not merely a backend constant.

\subsection{How simulators select a representation}\label{sec:bg-landscape}

Mature implementations exist for all three classes. State-vector engines include \aer~\cite{aleksandrowicz2019qiskit}, \lightning~\cite{asadi2024lightning}, \qsimg~\cite{isakov2021qsim}, Qulacs~\cite{suzuki2021qulacs}, QuEST~\cite{jones2019quest}, \juqcs~\cite{juqcs2018}, \nwqsim~\cite{li2023nwqsim}, and Atlas~\cite{xu2024atlas}. Stim provides stabiliser sampling~\cite{gidney2021stim}; Quimb~\cite{quimb2018}, TensorCircuit~\cite{tensorcircuit2023}, and \custn-backed engines~\cite{bayraktar2023cuquantum} provide tensor-network paths. Their selection policies differ.

Some tools require the user to choose an engine before the circuit is inspected. Qulacs and QuEST expose state-vector and density-matrix engines, \qsimg exposes state-vector, hybrid, and MPS engines~\cite{qsimv022}, and Stim is a tableau simulator. The user is responsible for matching the engine to the circuit.

Multi-method frameworks may choose automatically. \aer's default \code{method='automatic'} can select its stabiliser engine for an all-Clifford circuit~\cite{wood2024aerprivacy}; an explicit \code{method='statevector', device='GPU'} request keeps the requested method. \qsim inspects eligible sampled-count requests made through its generic \code{gpu} and \code{dense} entry points. These entry points describe a fallback full-state path rather than pinning a representation; explicit density-matrix, MPS, and tensor-network requests retain their requested semantics. Tab.~\ref{tab:feature-matrix} compares this scope with the other selected systems.

Routing before allocation matters most at the Clifford boundary. The parser can choose the polynomial tableau before creating an exponential state vector, and the selected representation then determines how noise, shots, and output are handled downstream.

\subsection{Vendor quantum APIs and toolkit primitives}\label{sec:bg-cuq}

Accelerator platforms expose useful operations at two levels. A quantum-simulation library such as \custate applies gates, swaps index bits, and samples a state vector; \custn contracts tensor networks~\cite{bayraktar2023cuquantum}. A lower-level toolkit instead provides matrix multiplication, tensor permutation, random numbers, reductions, and communication. NVIDIA supplies both levels. CUDA offers cuBLAS and cuTensor, while \cuq adds the quantum operations used by several simulators. AMD supplies the toolkit level through HIP, hipBLAS/hipBLASLt, hipTensor, rocRAND, and GPU-aware MPI, but the evaluated stack has no direct analogue of \custate or \custn.

\qsim targets the toolkit level. Its common interface describes gates, tensor axes, and layouts; CUDA, HIP, CPU, and MPI backends implement the required operations with their native libraries or kernels. Throughout, \emph{vendor-agnostic} means that the layers above the backend name logical qubits, tensor axes, matrices, and layout maps rather than vendor handles or operation descriptors; a backend still chooses its own libraries, arithmetic modes, kernels, and feasibility limits, provided that it preserves the selected representation and layout semantics. \S\ref{sec:design} defines this interface, and \S\ref{sec:eval} measures its behaviour on both vendor stacks.

\section{Architecture}\label{sec:design}

A run begins with a parsed circuit $C$, noise model $\mathcal{N}$, requested entry point, and shot count $S$. A generic full-state entry point also identifies its fallback CPU, GPU, or MPI capability, but not a concrete per-operator lowering. R1--R3 define the three above-backend decisions: representation, fused-operator semantics, and qubit layout. R4 defines the primitive contract and feasibility limits through which a selected full-state backend implements them.

\begin{figure*}[t]
\centering
\includegraphics[width=\textwidth]{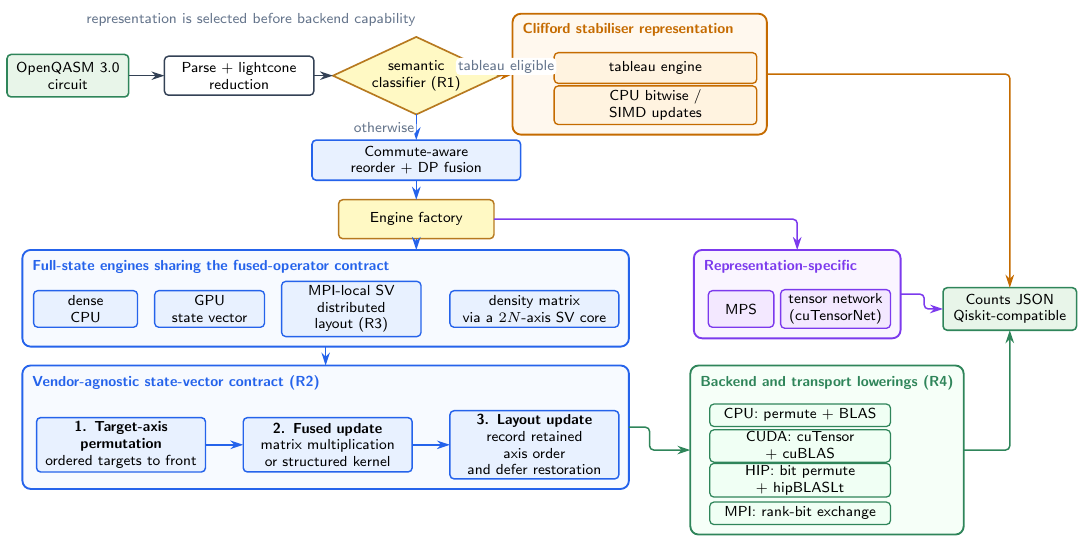}
\caption{System pipeline. Routing selects the representation before large state allocation. MPS and tensor-network engines retain representation-specific paths. Here SV denotes state vector and DP denotes dynamic programming.}
\label{fig:architecture}
\end{figure*}

\begin{description}
\item[(R1) Semantic routing.] A request through the generic \code{gpu} or \code{dense} entry point asks for sampled counts without pinning the representation. Before allocating a state vector, the router checks whether its circuit, noise, and output semantics support the stabiliser path. An explicit benchmarking switch can disable this route. Density-matrix, MPS, and tensor-network requests retain their requested method and numerical semantics (\S\ref{sec:bg-regimes}).

\item[(R2) Fused-operator semantics and local layout.] Full-state engines expose
\[
\mathrm{apply}(L,\Psi,G,V)\rightarrow(L',\Psi'),
\]
where $V$ is an ordered tuple of logical targets, $G$ is the corresponding fused operator, $L$ bijectively maps logical qubits to physical index-bit positions, and $\Psi$ is the physically stored state. A backend may use a specialised lowering or a generic transpose--transpose--GEMM--transpose (TTGT) path~\cite{springer2018tccg}. If a lowering leaves the state in a non-canonical order, $L'$ records that order.

\item[(R3) Distributed placement invariant.] MPI engines extend the layout map across local and rank-address bits, so that after step $t$ the map $\pi_t$ of \S\ref{sec:background} sends each logical qubit to the physical index bit that stores it. The map is a bijection: every qubit occupies one position and every position holds one qubit. If a gate target lies in a rank-address bit, the dispatcher exchanges it with a local bit and records the new placement in $\pi_t$. Later gates resolve their targets through the map, and output routines either decode the final placement or request canonical order.

\item[(R4) Backend lowering through a shared primitive contract.] CUDA, HIP, CPU, and MPI implementations provide four primitives (matrix multiplication, index permutation, sampling, and distributed exchange) plus specialised gate kernels. If routing retains a full-state path, the fallback capability selects the backend family before fusion. That family supplies feasibility limits, including the fusion cap and distributed chunk width. Once a fused operator and its physical targets are known, the backend selects the concrete kernel or transport.
\end{description}

Fig.~\ref{fig:architecture} traces the pipeline from an \openqasm circuit~\cite{cross2022openqasm} to sampled counts. Parsing first removes operations that cannot influence any measured qubit, that is, operations outside the measured outputs' causal cone. Eligible Clifford circuits with Pauli noise then take the tableau path without allocating a $2^N$ state vector. For a circuit that remains on the full-state path, the engine factory selects a compatible backend family and its limits before commute-aware reordering and fusion. The resulting blocks execute through the R2 fused-operator interface; MPS and tensor-network engines keep their representation-specific semantics.

\section{Implementation}\label{sec:methods}

The implementation follows the architecture's late-binding order: it resolves the representation before allocation, selects the full-state backend family and its limits before fusion, and chooses each concrete per-operator lowering after the fused block and its physical targets are known.

\subsection{Semantic routing and engine selection (R1)}\label{sec:methods-stab}

The generic \code{gpu} and \code{dense} entry points specify a fallback full-state path but leave the representation open. For sampled-count requests, R1 may instead select a tableau when the circuit is eligible. Explicit representation requests keep their selected method. If routing retains full-state execution, the fallback capability selects a compatible CPU, GPU, or MPI backend family and the router checks its size limits.

A circuit is tableau-eligible when every operation remains within Clifford dynamics: Clifford gates, rotations equivalent to Clifford gates, Pauli measurements, and supported Pauli noise channels. A Pauli channel is a classical mixture of tensor-product Pauli errors. For each shot, \qsim samples one error from every such mixture. Pauli operators are themselves Clifford operators, so inserting the sampled errors into a Clifford circuit leaves the entire trajectory Clifford and exactly representable by a tableau~\cite{gottesman1998heisenberg}. The channel output is generally a mixture rather than one stabiliser state, but the ensemble of tableau trajectories is exact: averaging their conditional measurement distributions gives the noisy output distribution. Pauli noise therefore retains the polynomial stabiliser path rather than forcing a full-state representation.

The noise need not have one global error rate. Channel parameters are attached to individual circuit operations, and the one-qubit Pauli form permits unequal $X$, $Y$, and $Z$ probabilities. The implemented checker recognises bit flip, phase flip, bit-phase flip, depolarising noise, phase damping in its $I/Z$-mixture form, and biased one-qubit Pauli noise; multi-qubit depolarising noise is uniform over its non-identity Pauli strings. Amplitude damping and generic custom Kraus channels are not eligible, and the router does not inspect a custom Kraus list to infer that it happens to define a Pauli mixture. A single pass over the parsed circuit performs the eligibility check; any unsupported operation keeps the request on the full-state path. The tableau is CPU-resident, and its arrays of 64-bit integers are updated through AVX-512, AVX2, SVE/SVE2, NEON, or scalar kernels. A CUDA or HIP request that takes this route therefore allocates no GPU state vector.

For an eligible circuit, \qsim uses an Aaronson--Gottesman tableau~\cite{aaronson2004improved}. The tableau has $N$ destabiliser rows and $N$ stabiliser rows, each containing $2N$ binary symplectic coordinates and one phase bit. The implementation stores each symplectic column and the phase column in its own array of 64-bit unsigned integers. Within any such array, row $s$, for $0\le s<2N$, occupies bit $s\bmod 64$ of element $\lfloor s/64\rfloor$; each column therefore uses $\lceil 2N/64\rceil$ array elements. A one- or two-qubit Clifford gate applies a fixed sequence of bitwise operations to a constant number of columns.

\begin{samepage}
Let $W_{\mathrm{Clifford}}(N)$ count the 64-bit integer updates needed for one such gate. Then
\begin{equation}
W_{\mathrm{Clifford}}(N)
  = \Theta\left(\left\lceil\frac{2N}{64}\right\rceil\right)
  = \Theta(N).
\label{eq:cost-cliff-gate}
\end{equation}
\end{samepage}
For measurement, the implementation forms a row-oriented copy in which a generator's $N$ $X$-coordinates and $N$ $Z$-coordinates occupy $2\lceil N/64\rceil$ 64-bit integers. A measurement can combine $\bigO{N}$ such generator rows. Charging the $|C|$ circuit operations at the gate-update cost and adding measurement and initialisation work gives the conservative per-shot work, measured in the same unit:
\begin{equation}
W_{\mathrm{shot}}(C,N,n_{\mathrm{meas}})
  = \bigO{|C|\cdot N+(n_{\mathrm{meas}}+1)\cdot N^2},
\label{eq:cost-stab-shot}
\end{equation}
where $n_{\mathrm{meas}}$ counts measurement operations; for the terminal-readout circuits evaluated it equals the number of measured qubits. The added unit accounts for constructing the tableau. The fast path below constructs the noiseless tableau once and reuses it across shots, amortising this initialisation cost.

\begin{algorithm}
\SetAlgoLined
\KwIn{Circuit $C$, classified as tableau-eligible; Clifford gates $g$ and per-operation Pauli channels $\mathcal{E}$. Each sample $\sigma\sim\mathcal{E}$ is a Pauli operator drawn from the corresponding channel mixture.}
\KwOut{Per-shot measurement record $\{b_q\}$ for measured qubits $q$.}
$\mathcal{T} \leftarrow$ initialise the tableau for $\ket{0^N}$\;
\For{op in $C$}{
  \uIf{op is a Clifford gate $g$}{apply\_$g$($\mathcal{T}$)\;
    \If{$g$ carries a Pauli channel $\mathcal{E}$}{sample
    $\sigma\sim\mathcal{E}$; apply\_$\sigma$($\mathcal{T}$)\;}}
  \uElseIf{op is a standalone Pauli channel $\mathcal{E}$}{sample
  $\sigma\sim\mathcal{E}$; apply\_$\sigma$($\mathcal{T}$)\;}
  \uElseIf{op is a measurement on qubit $q$}{$b_q \leftarrow$
  measure($\mathcal{T}$, $q$)\;}
}
\Return{$\{b_q\}$}\;
\caption{One stabiliser shot under semantic routing. $\mathcal{T}$ denotes the Aaronson--Gottesman tableau; $\mathcal{E}$ denotes a per-operation Pauli channel.}
\label{alg:tableau}
\end{algorithm}

For terminal computational-basis measurements, sampled Pauli errors need not trigger a full tableau replay. Clifford gates map Pauli operators to Pauli operators, so the errors can be propagated to the end of the circuit and combined into one Pauli frame,
\[
F = \omega \prod_{j=0}^{N-1} X_j^{f_{X,j}} Z_j^{f_{Z,j}}, \qquad \mathbf{f}_X,\mathbf{f}_Z \in \{0,1\}^{N},
\]
where $f_{X,j}$ and $f_{Z,j}$ record the $X$ and $Z$ components on qubit $j$ and $\omega$ is an irrelevant global phase. (The $X_jZ_j$ ordering fixes the sign of a $Y$ component.)

This reduction assumes one terminal joint computational-basis readout. All sampled noise must precede the first recorded measurement, with no reset, adaptive classical control, or mid-circuit dependence on a measurement outcome.

Let $\mathcal{M}=(q_0,\ldots,q_{m-1})$ be the ordered list of terminally measured qubits, and let $\mathbf{f}_{X,\mathcal{M}}=(f_{X,q_0},\ldots,f_{X,q_{m-1}})$ be the restriction of the frame's $X$ component to those qubits. For every measured output string $z\in\{0,1\}^{m}$,
\begin{equation}
\Pr\bigl(z \mid F\bigr)
=
\Pr_{\mathrm{noiseless}}\bigl(z\oplus\mathbf{f}_{X,\mathcal{M}}\bigr),
\label{eq:pauli-frame-identity}
\end{equation}
where $\oplus$ is bitwise XOR. The $Z$ component changes only phases and therefore leaves computational-basis probabilities unchanged. Each $X$ component on a measured qubit flips the corresponding output bit; on unmeasured qubits it only permutes traced-out outcomes.

Accordingly, XORing a noiseless output sample with $\mathbf{f}_{X,\mathcal{M}}$ yields a sample from the trajectory conditioned on $F$. Averaging Eq.~\eqref{eq:pauli-frame-identity} over frames drawn from the Pauli-channel mixture recovers the noisy distribution $\Pr_{\mathrm{noisy}}(z)=\mathbb{E}_F[\Pr(z\mid F)]$.

\qsim constructs the noiseless tableau once and propagates up to 64 independent Pauli frames together. Bit $b$ of each 64-bit integer records the $X$ or $Z$ component for shot $b$, so one Clifford update advances the whole batch. The implementation then samples one ideal terminal record and applies the XOR correction for each active frame. Thus $S$ shots require $\lceil S/64\rceil$ batched frame walks and $S$ noiseless samples, with no nested trajectory loop. Tableau-eligible circuits outside this terminal-readout regime use Algorithm~\ref{alg:tableau}. Appendix~\ref{appendix:qec-detail} describes the synthetic benchmark, which uses the fast path.

\subsection{Fused-operator interface and lazy TTGT (R2)}\label{sec:methods-axisshift}

For full-state execution, R2 fixes the semantics of a fused operator while allowing several physical kernels. Given $(L,\Psi,G,V)$, a backend resolves the ordered physical target positions $A=(L(q_0),\ldots,L(q_{k-1}))$, applies $G$ in the little-endian order fixed by $V$, and returns a state-layout pair that preserves the logical state. Diagonal, controlled, small-width in-place, and custom dense kernels handle common cases. The generic TTGT-style fallback moves the target bits into the matrix-row block, reshapes the state as $M\in\mathbb{C}^{2^k\times2^{N-k}}$, and computes $M'=GM$. The updated layout map records the retained forward permutation, allowing the inverse data permutation to remain deferred (Fig.~\ref{fig:axis-shift-tensor}).

Fusion and placement operate on fused blocks and layout metadata; they do not assume that every block executes through GEMM.

A tensor in permuted physical order remains a valid state representation as long as $L$ records the physical index-bit position of each logical qubit. For the generic permutation--GEMM lowering, applying $G$ can be written as
\[
\Psi \xmapsto{\;\Pi_A\;} \widetilde{\Psi} \xmapsto{\;G \otimes I_{2^{N-k}}\;} \widetilde{\Psi}' \xmapsto{\;\Pi_A^{-1}\;} \Psi'_{\mathrm{rest}},
\]
where $\Pi_A$ sends the target at $L(q_j)$ to physical position $N-k+j$. Thus $q_j$ becomes bit $j$ of the matrix-row index; in a row-major tensor display the leading target axes appear as $q_{k-1},\ldots,q_0$. The state $\Psi'_{\mathrm{rest}}$ restores the input order. The restored branch returns $(L,\Psi'_{\mathrm{rest}})$. The lazy branch stops at $\widetilde{\Psi}'$ and returns $(L',\widetilde{\Psi}')$, with $L'$ recording the retained permutation.

Under this reshape, if $b_j$ is the value of target $q_j$, its row number is $\sum_{j=0}^{k-1}b_j2^j$. The columns enumerate assignments to the remaining bits. This is the basis order used by $G$, so left multiplication by $G$ applies the fused gate independently to every column.

The data permutation $\Pi_A$ rearranges amplitudes; let $\alpha_A$ denote the same rearrangement at the level of physical bit positions. Thus, a logical qubit stored at position $b$ before the permutation is stored at position $\alpha_A(b)$ afterwards. Algorithm~\ref{alg:axis-shift} records this change as $L'=\alpha_A\circ L$. This is the gate-application form of TTGT~\cite{springer2018tccg}, with the trailing transpose represented in $L'$ instead of executed. We call the generic fallback \emph{lazy TTGT}; specialised R2 kernels need not use GEMM.

\begin{figure}[t]
\centering
\definecolor{axisCoreFill}{HTML}{E8F1FF}
\definecolor{axisCoreStroke}{HTML}{2563EB}
\definecolor{axisTargetFill}{HTML}{FFF3E0}
\definecolor{axisTargetStroke}{HTML}{C56B00}
\definecolor{axisDataFill}{HTML}{E8F5E9}
\definecolor{axisDataStroke}{HTML}{2F855A}
\definecolor{axisProcFill}{HTML}{FFF9C4}
\definecolor{axisProcStroke}{HTML}{B7791F}
\definecolor{axisInk}{HTML}{334155}
\definecolor{axisMuted}{HTML}{52606D}

\begin{tikzpicture}[
  x=1cm,
  y=1cm,
  font=\sffamily\scriptsize,
  every node/.style={align=center, outer sep=0pt},
  axisNode/.style={
    rectangle, rounded corners=1.5pt,
    draw=axisCoreStroke, line width=0.65pt, fill=axisCoreFill,
    minimum width=0.72cm, minimum height=0.48cm,
    inner xsep=2pt, inner ysep=1.5pt
  },
  targetAxis/.style={
    axisNode, draw=axisTargetStroke, line width=0.9pt, fill=axisTargetFill,
    text=axisTargetStroke
  },
  dataBox/.style={
    rectangle, rounded corners=2pt,
    draw=axisDataStroke, line width=0.7pt, fill=axisDataFill,
    minimum height=0.72cm,
    inner xsep=3pt, inner ysep=2pt
  },
  gateBox/.style={
    rectangle, rounded corners=2pt,
    draw=axisTargetStroke, line width=0.7pt, fill=axisTargetFill,
    minimum height=0.78cm, text width=1.20cm,
    inner xsep=3pt, inner ysep=2pt
  },
  matrixBox/.style={
    rectangle, rounded corners=2pt,
    draw=axisCoreStroke, line width=0.7pt, fill=axisCoreFill,
    minimum height=0.78cm, text width=1.55cm,
    inner xsep=3pt, inner ysep=2pt
  },
  layoutBox/.style={
    rectangle, rounded corners=2pt,
    draw=axisProcStroke, line width=0.7pt, fill=axisProcFill,
    minimum height=0.82cm, text width=2.55cm,
    inner xsep=3pt, inner ysep=2pt
  },
  deferBox/.style={
    rectangle, rounded corners=2pt,
    draw=axisTargetStroke, line width=0.65pt, fill=axisTargetFill!70,
    minimum height=0.45cm, text width=4.95cm,
    inner xsep=3pt, inner ysep=1.5pt
  },
  stageGroup/.style={
    rectangle, rounded corners=3pt,
    draw=axisInk!28, line width=0.55pt, fill=axisCoreFill!18,
    inner sep=0.14cm
  },
  stageTitle/.style={
    anchor=west, font=\sffamily\scriptsize\bfseries,
    text=axisInk, inner sep=0pt
  },
  secondary/.style={
    font=\sffamily\scriptsize, text=axisMuted,
    inner sep=0pt
  },
  flow/.style={
    -{Stealth[length=2.0mm,width=1.35mm]},
    draw=axisInk!82, line width=0.72pt
  },
  flowLabel/.style={
    font=\sffamily\scriptsize, text=axisInk!85,
    fill=white, inner sep=1.5pt
  }
]


\node[stageTitle] at (0.10,0.00) {1. Stored rank-$N$ tensor};
\node[secondary, anchor=east, text=axisTargetStroke] at (8.10,0.00)
  {identity layout $L(q)=q$; targets $V=(2,4)$};
\node[dataBox, text width=1.45cm] (inputState) at (1.05,-0.75)
  {state tensor\\$\Psi$};
\node[axisNode, minimum width=1.55cm] (a0) at (2.65,-0.75) {$N-1,\ldots,5$};
\node[targetAxis] (a1) at (3.85,-0.75) {$4$};
\node[axisNode]   (a2) at (4.60,-0.75) {$3$};
\node[targetAxis] (a3) at (5.35,-0.75) {$2$};
\node[axisNode]   (a4) at (6.10,-0.75) {$1$};
\node[axisNode]   (arest) at (6.85,-0.75) {$0$};
\coordinate (s1left) at (0.10,-0.75);
\coordinate (s1right) at (8.10,-0.75);

\node[stageTitle] at (0.10,-2.20) {2. Permute and flatten};
\node[targetAxis] (r0) at (0.55,-2.90) {$4$};
\node[targetAxis] (r1) at (1.30,-2.90) {$2$};
\node[axisNode, minimum width=1.55cm] (c0) at (2.60,-2.90) {$N-1,\ldots,5$};
\node[axisNode]   (c1) at (3.80,-2.90) {$3$};
\node[axisNode]   (c2) at (4.55,-2.90) {$1$};
\node[axisNode]   (crest) at (5.30,-2.90) {$0$};
\draw[draw=axisInk!55, line width=0.65pt]
  (1.73,-2.59) -- (1.73,-3.19);
\node[secondary] (rowLabel) at (0.93,-3.32) {$2^k$ rows};
\node[secondary] (columnLabel) at (3.85,-3.32) {$2^{N-k}$ columns};
\node[matrixBox] (mview) at (7.10,-2.90)
  {$M$\\$2^k\times 2^{N-k}$};
\draw[flow] (crest.east) -- (mview.west);
\coordinate (s2left) at (0.10,-2.90);
\coordinate (s2right) at (8.10,-2.90);

\node[stageTitle] at (0.10,-4.35) {3. Apply $G$ to the row index};
\node[gateBox]   (gate) at (1.65,-5.08)
  {$G$\\$2^k\times 2^k$};
\node[secondary, font=\sffamily\normalsize] at (2.72,-5.08) {$\times$};
\node[matrixBox] (matIn) at (3.82,-5.08)
  {$M$\\$2^k\times 2^{N-k}$};
\node[secondary, font=\sffamily\normalsize] at (4.93,-5.08) {$=$};
\node[matrixBox] (matOut) at (6.05,-5.08)
  {$M'$\\$2^k\times 2^{N-k}$};
\coordinate (s3left) at (0.10,-5.08);
\coordinate (s3right) at (8.10,-5.08);

\node[stageTitle] at (0.10,-6.50) {4. Updated state and layout};
\node[layoutBox] (layout) at (2.00,-7.28)
  {layout metadata\\$L'=\alpha_A\circ L$};
\node[secondary, font=\sffamily\scriptsize] at (3.68,-7.28) {and};
\node[dataBox, text width=3.65cm] (outputState) at (6.00,-7.28)
  {stored tensor $\widetilde{\Psi}'$\\row-major axis order $(4,2,N-1,\ldots,5,3,1,0)$};
\node[deferBox] (defer) at (4.10,-8.15)
  {$\Pi_A^{-1}$ is deferred; no immediate data movement};
\coordinate (s4left) at (0.10,-7.55);
\coordinate (s4right) at (8.10,-7.55);

\begin{scope}[on background layer]
  \node[stageGroup, fit=(s1left)(s1right)(inputState)(a0)(a1)(a2)(a3)(a4)(arest)]
    (stage1) {};
  \node[stageGroup,
    fit=(s2left)(s2right)(r0)(r1)(c0)(c1)(c2)(crest)(rowLabel)(columnLabel)(mview)]
    (stage2) {};
  \node[stageGroup, fit=(s3left)(s3right)(gate)(matIn)(matOut)]
    (stage3) {};
  \node[stageGroup, fit=(s4left)(s4right)(layout)(outputState)(defer)]
    (stage4) {};
\end{scope}

\draw[flow] (stage1.south) -- node[flowLabel, right=2pt]
  {$\Pi_A$: form the target row block} (stage2.north);
\draw[flow] (stage2.south) -- node[flowLabel, right=2pt]
  {apply $M'=GM$} (stage3.north);
\draw[flow] (stage3.south) -- node[flowLabel, right=2pt]
  {reinterpret the same buffer} (stage4.north);

\end{tikzpicture}
\caption{Lazy TTGT for the identity layout $L(q)=q$ and targets $V=(2,4)$. Axes are shown from most to least significant, so the target row block is $(4,2)$ although $q_0=2$ is its least-significant bit; $L'$ records the retained permutation.}
\label{fig:axis-shift-tensor}
\end{figure}

\begin{algorithm}
\SetAlgoLined
\KwIn{State tensor view $\Psi$ with layout map $L$; gate matrix $G$ on ordered logical target list $V=(q_0,\ldots,q_{k-1})$, with $q_0$ least significant; $\mathit{mode}\in\{\mathrm{dense},\mathrm{diag}\}$}
\KwOut{Updated layout map $L'$ and stored state view $\widetilde{\Psi}'$}
$A \leftarrow (L(q_0),\ldots,L(q_{k-1}))$\;
\eIf{$\mathit{mode}=\mathrm{diag}$}{
  $\widetilde{\Psi}' \leftarrow$ multiply each amplitude by the diagonal entry indexed by the target bits at positions $A$\;
  $L' \leftarrow L$\;
}{
  choose $\alpha_A$ and $\Pi_A$ that place target $q_j$ at row-index bit $j$\;
  $\widetilde{\Psi} \leftarrow \Pi_A(\Psi)$\;
  view $\widetilde{\Psi}$ as $M \in \mathbb{C}^{2^k\times 2^{N-k}}$\;
  $M' \leftarrow G M$\;
  $\widetilde{\Psi}' \leftarrow$ tensor view of $M'$\;
  $L' \leftarrow \alpha_A \circ L$\;
}
\Return $(L',\widetilde{\Psi}')$\;
\caption{Generic axis-shifted (lazy-TTGT) lowering of a fused $k$-qubit operator. $L$ maps logical qubits to current physical index-bit positions. The dense branch materialises the forward permutation $\Pi_A$ and records its resulting layout as $L'=\alpha_A\circ L$; the omitted data permutation $\Pi_A^{-1}$ and corresponding layout repair $\alpha_A^{-1}$ are deferred. The diagonal branch leaves the layout unchanged.}
\label{alg:axis-shift}
\end{algorithm}

For the generic lowering, the cost model separates explicit permutation traffic from arithmetic. We write $\mathcal{V}$ for a data-movement volume measured by the number of complex amplitudes transferred. When materialised, the permutation moves
\begin{equation}
\mathcal{V}_{\mathrm{perm}}(N) = \Theta(2^N)\;\text{complex amplitudes}
\label{eq:cost-perm}
\end{equation}
through memory (the whole state, once); the copy is elided when the targets already occupy the matrix-row positions in the required order. The GEMM performs
\begin{equation}
W_{\mathrm{dense}}(N,k) = \Theta(2^{N+k})\;\text{complex multiply-adds}.
\label{eq:cost-formal}
\end{equation}
For a diagonal fused operator, the permutation and GEMM collapse to a streaming elementwise multiply:
\begin{equation}
W_{\mathrm{diag}}(N) = \Theta(2^N).
\label{eq:cost-diag-formal}
\end{equation}
\qsim exposes these two scheduling terms above the backend API; CUDA, HIP, and CPU backends implement them with different libraries and constants. The two terms are profiled separately on GH200 (Appendix~\ref{appendix:resource-profile}); Appendix~\ref{appendix:fusion-sensitivity} lists the backend-specific fusion caps, including the Hopper setting evaluated in \S\ref{sec:eval-3way}.

The fuser, density-matrix wrapper, and rank-local dispatcher used during MPI execution submit fused operators with explicit target order; each backend then selects a lowering.

The density-matrix engine reuses this contract by representing an $N$-qubit density matrix $\rho\in\mathbb{C}^{2^N\times2^N}$ as a vector with $2N$ binary axes. Column-major vectorisation stacks the columns of $\rho$ into $\operatorname{vec}(\rho)\in\mathbb{C}^{2^{2N}}$: the lower $N$ bits of the vector index select a matrix row, and the upper $N$ bits select a matrix column. For a unitary $U\in\mathbb{C}^{2^N\times2^N}$, this convention gives
\[
\operatorname{vec}(U\rho U^\dagger) = (\overline{U}\otimes U)\operatorname{vec}(\rho),
\]
where $\overline{U}$ denotes entrywise conjugation. A channel with $R$ Kraus operators $E_i\in\mathbb{C}^{2^k\times2^k}$ is represented after vectorisation by $\sum_{i=1}^{R} \overline{E_i}\otimes E_i$. Each Kronecker term is a $4^k\times4^k$ matrix containing $16^k$ entries, so materialising and summing all $R$ terms costs $\Theta(R\cdot16^k)$. A logical $k$-qubit channel therefore acts on $2k$ axes in the doubled representation. For each ordered target tuple, the wrapper passes the row-copy positions first and the column-copy positions second, preserving target order within each group. The original factor acts on the lower row bits, and the conjugate factor acts on the more-significant column bits. The implementation constructs the dense superoperator once. Replacing $N$ by $2N$ and $k$ by $2k$ in Eq.~\eqref{eq:cost-formal} gives $\Theta(2^{2N+2k})$ application work; the vector and superoperator contain $4^N$ and $16^k$ entries, respectively.

\subsection{Fusion on the state-vector interface}\label{sec:methods-fusion}

Fusion replaces consecutive eligible gates by their matrix product. A wider fused block reduces the number of gate applications and layout changes, but its matrix grows with the number of distinct target qubits. \qsim first performs a semantics-preserving reorder, moving a gate only across disjoint or commuting neighbours, and then chooses the block boundaries with a deterministic dynamic program. Within a block, matrices remain in circuit order, with later gates multiplied on the left. Each operation carries a conservative marker for eligibility for the streaming diagonal lowering. A block has mode $\mathrm{diag}$ when every constituent operation carries that marker and mode $\mathrm{dense}$ otherwise.

For a block of width $k\ge1$, the scheduling score is
\begin{align*}
\gamma(k, \mathrm{dense}) &= 4^k \\
\gamma(k, \mathrm{diag})  &= 1 ,
\end{align*}
where $k$ is the number of distinct target qubits in the block. The dense score is the number of entries in its $2^k\times2^k$ matrix. A diagonal block receives unit score because it can use a streaming elementwise kernel. These scores guide the partition rather than predict its runtime.

Only unitary gates without attached noise enter the dynamic program. A barrier, measurement, reset, control-flow marker, or noise-bearing gate ends the current fusible run and passes through unchanged. A run longer than $\mathcal{W}=64$ gates is divided into consecutive optimisation windows of at most $\mathcal{W}$ gates, bounding the quadratic search. The evaluated configurations cap dense blocks at $K_{\max}$ and marked diagonal blocks at a backend-specific value no greater than $K_{\max}$. An input gate wider than its applicable cap is retained as a singleton rather than rejected. We therefore use $K_{\mathrm{blk}}$ for the larger of $K_{\max}$ and the width of the widest eligible input gate; it bounds the width of any emitted block in these configurations.

Consider one window containing $h\le\mathcal{W}$ reordered gates $g_0,\ldots,g_{h-1}$. Let $B(j,i)$ be the candidate block formed by the consecutive gates $g_j,\ldots,g_{i-1}$. Its width $k(B)$ is the number of distinct target qubits, and its block score is $\Gamma(B)=\gamma(k(B),\operatorname{mode}(B))$. Let $D_i$ be the minimum score for partitioning the first $i$ gates in the window, with $D_0=0$. If the final block begins at $j$, the preceding blocks contribute $D_j$ and the final block contributes $\Gamma(B(j,i))$. Let $\mathcal{J}_i$ contain every $j\in\{0,\ldots,i-1\}$ for which the width of $B(j,i)$ does not exceed the cap for its mode, and always include $i-1$ so that an input gate wider than its fusion cap remains an unfused singleton. Therefore
\begin{equation}
D_i = \min_{j\in\mathcal{J}_i}
\bigl\{D_j+\Gamma(B(j,i))\bigr\},
\qquad 1\le i\le h.
\label{eq:dp-recurrence}
\end{equation}
For each endpoint $i$, the recurrence tests every feasible start $j$ of the final block; backtracking from $D_h$ recovers the chosen boundaries. The search costs $\bigO{\mathcal{W}^2\cdot K_{\mathrm{blk}}}$ per window and $\bigO{|C|\cdot\mathcal{W}\cdot K_{\mathrm{blk}}}$ over a circuit with $|C|$ parsed instructions. Constructing a dense product for a multi-operation block of width $k\le K_{\max}$ by conventional matrix multiplication costs $\bigO{8^k}$, giving the conservative circuit-wide bound $\bigO{|C|\cdot8^{K_{\max}}}$ in Tab.~\ref{tab:complexity-summary}. An over-cap singleton reuses its input matrix and is not included in this construction term. After the windows are partitioned, a greedy pass attempts to merge blocks across their boundaries. The complete reorder-and-fuse pipeline is repeated at most four times and stops earlier when the gate count no longer decreases. \S\ref{sec:eval-3way} measures the choice of $K_{\max}$. A diagonal-aware DP score changes the measured Hopper intervals by less than $2\%$ (Appendix~\ref{appendix:fusion-sensitivity}); the diagonal gain comes from the streaming kernel rather than the partition score.

\subsection{Distributed logical-to-physical placement (R3)}\label{sec:methods-mpi}

A distributed layout can be viewed as a cache. Local index-bit positions are the cache slots, logical qubits are the items, and each fused gate requests its target qubits. A target currently stored in a rank-address bit causes a \emph{promotion}: the dispatcher exchanges that bit with an available local position. Retaining the new placement avoids an immediate reverse exchange and makes the future gate stream the request sequence for the paging model below.

A state vector distributed across $P=2^r$ ranks ($0\le r<N$) keeps $\Qloc=N-r$ of its $N$ index bits local to each rank; the remaining $r$ bits are rank-address bits. Let $\ket{\psi}$ denote the global state and $\psi_x=\langle x\vert\psi\rangle$ its amplitude at flat basis index $0\le x<2^N$. For rank $p\in\{0,\ldots,P-1\}$, let $\boldsymbol{\psi}^{(p)}\in\mathbb{C}^{2^{\Qloc}}$ be its local amplitude vector and $\psi^{(p)}_j$ its entry at local offset $j$. A canonical block distribution assigns
\begin{equation}
\psi^{(p)}_j=\psi_{p\cdot2^{\Qloc}+j},\qquad 0\le j<2^{\Qloc}.
\label{eq:mpi-block}
\end{equation}
Thus rank $p$ owns a contiguous range of global amplitudes. The $r$ high physical index bits hold the rank address $p$, and the $\Qloc$ low bits hold the local offset $j$. A gate whose target maps to a rank-address bit requires data movement before a local kernel can apply it. Subsequent gates can retain that placement as long as dispatch and output use the updated map.

The dispatcher stores a bijection from logical qubits to physical index bits,
\[
\pi_t:\{0,\ldots,N-1\}\rightarrow\{0,\ldots,N-1\},
\]
together with its inverse. The step index $t$ advances on every promotion and fused-operator application.

To specify the direction of the map, write a basis index $x$ in bits $x_0,\ldots,x_{N-1}$, with $x_0$ least significant. Let $y$ be the bit string obtained by placing logical bit $x_q$ at physical position $\pi_t(q)$, so that $y_{\pi_t(q)}=x_q$ for every logical qubit $q$. The corresponding permutation operator is defined on computational-basis states by
\[
\Pi_{\pi_t}\ket{x}=\ket{y}.
\]
Thus $\Pi_{\pi_t}$ carries logical bit $q$ to physical position $\pi_t(q)$. For a fused operator on $V=(q_0,\ldots,q_{k-1})$, the ordered physical targets are $A_t=\pi_t(V)=(\pi_t(q_0),\ldots,\pi_t(q_{k-1}))$.

If a target $b\in A_t$ is a rank-address bit, the dispatcher exchanges it with an unpinned local index bit $\ell$. If $s$ of the $k$ targets are rank-address bits, then $k-s$ local targets are pinned. Whenever $k\le\Qloc$ (the hypothesis of Theorem~\ref{thm:lazy-mpi-invariant} below), the number of available local bits is $\Qloc-(k-s)\ge s$, so all promotions are feasible.

Write $\ket{\psi_t}$ for the canonical logical state after step $t$, the state a canonically ordered execution would hold, and $\ket{\widetilde{\psi}_t}$ for the state as physically stored. The tilde marks permuted storage order, exactly as for the axis-permuted tensor $\widetilde{\Psi}$ of \S\ref{sec:methods-axisshift}. Equation~\eqref{eq:mpi-block} applies to the amplitude vector of $\ket{\widetilde{\psi}_t}$.

\begin{theorem}[Distributed placement invariant]\label{thm:lazy-mpi-invariant}
Assume an initial bijection $\pi_0$ and stored state satisfying $\ket{\widetilde{\psi}_0}=\Pi_{\pi_0}\ket{\psi_0}$, and assume every fused operator acts on at most $\Qloc$ qubits. After each step $t$, the stored and canonical states satisfy
\begin{equation}
  \ket{\widetilde{\psi}_t}
    =\Pi_{\pi_t}\ket{\psi_t}.
\label{eq:lazy-mpi-invariant}
\end{equation}
\end{theorem}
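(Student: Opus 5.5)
Induction on $t$. The base case $t=0$ is the stated hypothesis $\ket{\widetilde{\psi}_0}=\Pi_{\pi_0}\ket{\psi_0}$. The step index advances on exactly two kinds of event, a promotion and a fused-operator application, so the inductive step splits into these two cases. In each case I assume Eq.~\eqref{eq:lazy-mpi-invariant} at step $t$ and derive it at step $t+1$.

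\textbf{Promotion.} The dispatcher exchanges a rank-address bit $b$ with an unpinned local bit $\ell$. Feasibility is guaranteed by the counting argument preceding the theorem: with $k\le\Qloc$, there are $\Qloc-(k-s)\ge s$ free local bits. The canonical state is unchanged, $\ket{\psi_{t+1}}=\ket{\psi_t}$. Physically the exchange realises the bit-swap permutation $\Sigma_{b\ell}$ on basis indices, so $\ket{\widetilde{\psi}_{t+1}}=\Sigma_{b\ell}\ket{\widetilde{\psi}_t}$, while the map is updated as $\pi_{t+1}=\tau_{b\ell}\circ\pi_t$ with $\tau_{b\ell}$ the transposition of positions. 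The key identity is the composition law $\Pi_{\sigma\circ\pi}=\Pi_\sigma\Pi_\pi$ for bit-position permutations. It follows directly from the definition $y_{\pi(q)}=x_q$: applying $\Pi_\pi$ puts $x_q$ at $\pi(q)$, and then $\Pi_\sigma$ moves it to $\sigma(\pi(q))$. With $\Sigma_{b\ell}=\Pi_{\tau_{b\ell}}$, this gives $\ket{\widetilde{\psi}_{t+1}}=\Pi_{\tau_{b\ell}}\Pi_{\pi_t}\ket{\psi_t}=\Pi_{\pi_{t+1}}\ket{\psi_{t+1}}$.

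\textbf{Fused application.} All targets are now local, since any remaining rank-address targets have been promoted. The canonical update is $\ket{\psi_{t+1}}=G_V\ket{\psi_t}$, where $G_V$ is $G$ acting on logical qubits $V$ in little-endian target order. The rank-local kernel, per the R2 contract (\S\ref{sec:methods-axisshift}), applies $G$ at physical positions $A_t=\pi_t(V)$ in the same order. It may also apply a layout permutation $\alpha$ of local positions, recorded as $\pi_{t+1}=\alpha\circ\pi_t$; the identity case covers restored or diagonal lowerings. So $\ket{\widetilde{\psi}_{t+1}}=\Pi_\alpha G_{A_t}\ket{\widetilde{\psi}_t}$. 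The needed lemma is the conjugation identity
\[
G_{\pi(V)}\,\Pi_\pi=\Pi_\pi\,G_V,
\]
which I would verify on basis states. Both sides act as $G$ on the ordered bit tuple $(x_{q_0},\ldots,x_{q_{k-1}})$, with these bits placed at $\pi(q_j)$ on the left and relocated after the action on the right. Order is preserved because the $j$-th physical target carries $x_{q_j}$. Combining the lemma with the inductive hypothesis and the composition law gives $\Pi_\alpha\Pi_{\pi_t}G_V\ket{\psi_t}=\Pi_{\pi_{t+1}}\ket{\psi_{t+1}}$.

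\textbf{Main obstacle.} I expect the real work to lie in carefully justifying that the distributed stored state restricted to each rank, as in Eq.~\eqref{eq:mpi-block}, behaves exactly as the global permutation operators say. Concretely, one must show that a rank-local kernel acting only on local positions equals $G_{A_t}$ on the global vector, and that the MPI exchange between rank pairs equals $\Sigma_{b\ell}$ globally. I would handle this by writing the global index as $p\cdot2^{\Qloc}+j$ and checking that both operations act on the bits of $(p,j)$ as claimed. The bijectivity of $\pi_{t+1}$ is immediate, since it is a composition of bijections.
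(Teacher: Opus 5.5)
Your proposal is correct and takes essentially the paper's route. It uses induction over promotions and fused applications. The promotion case rests on the composition law $\Pi_{\tau}\Pi_{\pi_t}=\Pi_{\tau\circ\pi_t}$, and the fused case on the conjugation identity $G_{\pi_t(V)}\Pi_{\pi_t}=\Pi_{\pi_t}G_V$ together with absorbing the retained relabelling into $\pi_{t+1}=\hat{\alpha}_A\circ\pi_t$, which are exactly the paper's two lemmas. Your extra check, that rank-local kernels and pairwise exchanges realise these global operators under the block distribution, is something the paper assumes without argument, so it adds rigour rather than changing the proof.
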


\begin{proof}[Proof sketch]
The base case is the assumed initial invariant. A distributed bit transpose swaps two physical index bits; composing $\pi_t$ with the same transposition preserves Eq.~\eqref{eq:lazy-mpi-invariant}. Once all targets are local, write their ordered physical positions as $A=\pi_t(V)$. Applying the physical fused operator to $A$ is equivalent to applying the logical operator to $V$; a lowering that retains the local axis relabelling $\alpha_A$ (recording $L'=\alpha_A\circ L$, \S\ref{sec:methods-axisshift}) is absorbed the same way a promotion is. Let $\hat{\alpha}_A$ denote the extension of $\alpha_A$ by the identity on rank-address bits; updating $\pi_{t+1}=\hat{\alpha}_A\circ\pi_t$ then records the retained relabelling. Induction over promotions and fused operators proves the invariant. Appendix~\ref{appendix:lazy-proof} gives the component-wise form.
\end{proof}

The production planner may choose a nonidentity $\pi_0$ before execution. A metadata-only relabel is valid for the permutation-invariant initial state $\ket{0^N}$ used here; an arbitrary supplied state must first be physically reordered or otherwise made to satisfy the theorem's initial invariant.

A consumer may either decode its result through $\pi_t$ or request a physical layout compatible with its operation. In particular, terminal computational-basis count sampling decodes logical output bits through the final map and does not require canonical amplitude order.

\begin{algorithm}
\SetAlgoLined
\KwIn{Fused operator $G$ on $V=(q_0,\ldots,q_{k-1})$; the current map $\pi=\pi_t$ and its inverse; local count $\Qloc$.}
\KwOut{$G$ applied to the distributed state; maps updated.}
\tcp{Precondition: $k\le\Qloc$.}
$A\leftarrow(\pi(q_0),\ldots,\pi(q_{k-1}))$\;
$\mathit{Pin}\leftarrow\{b\in A:b<\Qloc\}$ \tcp*{already-local target positions, pinned}
\ForEach{$b\in A$ with $b\ge\Qloc$}{
  choose $\ell\in\{0,\ldots,\Qloc-1\}\setminus \mathit{Pin}$ by the placement policy\;
  perform the selected distributed lowering for $(b\leftrightarrow\ell)$\;
  update $\pi$ and $\pi^{-1}$ by exchanging physical positions $b$ and $\ell$\;
  replace $b$ by $\ell$ in $A$ and insert $\ell$ into $\mathit{Pin}$\;
}
apply $G$ to ordered local positions $A$ through the R2 interface\;
\If{the selected R2 lowering changes the order of local bit positions}{
  compose $\pi$ with that returned position permutation, leaving rank-address positions fixed; rebuild $\pi^{-1}$\;
}
\caption{Logical-to-physical MPI dispatch. Placement metadata retains the resulting non-canonical order. A later consumer may decode the map or request materialisation.}
\label{alg:lazy-mpi}
\end{algorithm}

Algorithm~\ref{alg:lazy-mpi} lists this dispatch. Once its targets are local, a dense $k$-qubit block costs $\Theta(2^{\Qloc+k})$ arithmetic per rank, obtained by replacing $N$ with $\Qloc$ in Eq.~\eqref{eq:cost-formal}; structured blocks may use cheaper lowerings. The evaluated configurations use the backend-specific fusion caps of Appendix~\ref{appendix:fusion-sensitivity} and satisfy the distributed chunk-width constraint of Appendix~\ref{appendix:int-max}.

The following model counts promotions one target position at a time. One promotion exchanges a rank-address target with a local position and sends $\nu=2^{\Qloc-1}$ amplitudes per rank. Across all ranks this is $\Theta(2^N)$ amplitudes, up to transport-dependent constants.

Suppose the backend receives fused blocks $G_1,\ldots,G_J$. In the \emph{eager} baseline, let $p_i^{\mathrm E}$ be the number of forward promotions used to make the targets of $G_i$ local. Immediately after applying $G_i$, the baseline reverses those exchanges, so the block ends in the layout in which it began. In the \emph{lazy} execution, let $p_i^{\mathrm L}$ be the number of promotions required by $G_i$ from the layout retained after $G_{i-1}$, and let $p_{\mathrm{out}}$ count any additional cross-boundary exchanges demanded by the final consumer. The corresponding per-rank communication volumes, measured in complex amplitudes, are
\begin{equation}
\begin{aligned}
\mathcal{V}_{\mathrm{eager}}
  &=2\nu\sum_{i=1}^{J}p_i^{\mathrm E},\\
\mathcal{V}_{\mathrm{lazy}}
  &=\nu\left(\sum_{i=1}^{J}p_i^{\mathrm L}+p_{\mathrm{out}}\right).
\end{aligned}
\label{eq:lazy-vol}
\end{equation}
The factor two in $\mathcal{V}_{\mathrm{eager}}$ is the forward-and-reverse round trip. Lazy execution does not pay that return immediately: the promoted target remains local and may make a later $p_i^{\mathrm L}$ zero. The two sequences of counts need not be equal because they are evaluated under different evolving layouts. For permutation-aware terminal count sampling, the output is decoded through the retained map and $p_{\mathrm{out}}=0$.

\begin{corollary}[Ideal minimal demand-promotion count]\label{cor:belady}
Treat the $\Qloc$ local positions as page frames and the logical qubits as pages. Fix the fused stream and initial frame contents, with distinct targets in every tuple and at most $\Qloc$ targets per tuple. A missing target is promoted only on demand, at uniform cost, by evicting one local bit; any bit not targeted by the current tuple is eligible. Let $p_i$ be the number of promotions made for target tuple $i$. The policy knows the full future and does not prefetch. Call this the \emph{all-frame demand model}. Within this model, evicting the eligible bit whose next use is farthest in the future minimises $\sum_{i=1}^{J}p_i$.
\end{corollary}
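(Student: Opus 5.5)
The argument reduces the all-frame demand model to offline paging with demand fetching and then runs the classical exchange argument for Bélády's rule, adapted to multi-page requests. A tuple $V_i$ of $k_i\le\Qloc$ distinct targets is treated as one request: all of its pages must be resident at once, and none of them may be evicted while it is being served. First we check that the model is well posed. Every frame set contains $\Qloc$ qubits and each tuple has at most $\Qloc$ distinct targets, so whenever a target is missing at least one resident bit lies outside $V_i$. A demand eviction therefore always exists. This is the same counting as the feasibility remark before Theorem~\ref{thm:lazy-mpi-invariant}. Because the cost of a promotion is uniform, $\sum_i p_i$ is exactly the number of page faults, and Theorem~\ref{thm:lazy-mpi-invariant} ensures that any eviction sequence yields a correct state. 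Correctness is thus not at stake, and only the count has to be compared.

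To obtain a sequential request stream, serialise tuple $i$ into its targets in some fixed order. The constraint "no bit of the current tuple is evicted" then becomes a restriction on the feasible policies. The key lemma is a standard one. Let $\mathcal{A}$ be any feasible demand policy that agrees with Bélády's rule (FIF) on its first $m$ evictions. We build a feasible demand policy $\mathcal{A}'$ that agrees with FIF on its first $m+1$ evictions and makes no more promotions than $\mathcal{A}$. Induction on $m$ then shows that FIF is no worse than an optimal policy.

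The exchange goes as follows. Suppose at eviction $m+1$ the policy $\mathcal{A}$ evicts $u$ while FIF would evict $v\ne u$. Both are eligible, and $v$'s next use $\tau_v$ is no earlier than $u$'s next use $\tau_u$. Let $\mathcal{A}'$ evict $v$ and then copy $\mathcal{A}$, so that the two frame sets differ only in $\mathcal{A}'$ holding $u$ where $\mathcal{A}$ holds $v$. This continues until one of the following happens first.
\begin{enumerate}
\item $\mathcal{A}$ evicts $u$. Then $\mathcal{A}'$ evicts $v$ instead, and the two frame sets coincide.
\item $\mathcal{A}$ faults on $v$ no later than $\tau_v$ and evicts some $w$. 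If $w=u$, then $\mathcal{A}'$ has already faulted on $v$ at an earlier time, so it incurs no extra fault. Otherwise $\mathcal{A}'$ evicts $w$ and loads nothing, and the sets coincide with one fewer fault for $\mathcal{A}'$.
\item Time $\tau_u$ arrives first. The request for $u$ is then a hit for $\mathcal{A}'$ and a fault for $\mathcal{A}$.
\end{enumerate}
In each case the fault count of $\mathcal{A}'$ is at most that of $\mathcal{A}$. The no-prefetch assumption, together with the standard observation that lazy versions of policies are no worse, lets us restrict attention to demand policies throughout.

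I expect the main obstacle to be feasibility in the multi-target setting. When $\mathcal{A}'$ copies an eviction of $\mathcal{A}$, or substitutes $v$ for it, the evicted bit must lie outside the current tuple. The substitution of $v$ for $u$ is safe before $\tau_v$, because $v$ is not targeted by any tuple before its next use. The copied eviction of some $w$ by $\mathcal{A}$ is also safe, because $\mathcal{A}$ respected the pinning rule and the two frame sets differ only in $\{u,v\}$. The remaining subtlety is a tuple containing both $u$ and $v$, or one of them at time $\tau_u$. In that case, case 3 is reached no later than case 2. Ties in the next use are resolved by the serialisation order within a tuple, which keeps the argument consistent.
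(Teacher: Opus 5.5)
Your check that every fault has an eligible victim is sound, and so is keeping pinning as a constraint on a singly serialised string. The gap is in the exchange step, which carries your whole proof.

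After eviction $m+1$, it is $\mathcal{A}$ that holds $v$ and $\mathcal{A}'$ that holds $u$. While the runs differ, $\mathcal{A}$ can therefore neither evict $u$ (your case~1) nor fault on $v$ (your case~2). Both cases describe events that cannot occur, and the events that actually end the divergence are not handled correctly.

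\emph{Case 1, corrected.} The real event is that $\mathcal{A}$ evicts $v$ on a fault for some $g\notin\{u,v\}$, and $\mathcal{A}'$ should then evict $u$. But $u$ can be pinned. If the current tuple is serialised as $(g,u)$, then $\tau_u$ has not yet arrived, so your closing remark (case~3 comes no later than case~2) does not cover this. The move your construction prescribes is illegal.

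\emph{The event at $\tau_u$.} Here $\mathcal{A}$ refetches $u$ and evicts some eligible $w$. If $w\ne v$, the frame sets coincide only if $\mathcal{A}'$ evicts $w$ and loads the unrequested $v$. ``Evicts $w$ and loads nothing'' is not a legal move, because every local position always holds a qubit ($\pi_t$ is a bijection). Even with an empty frame, the sets would not coincide. So your $\mathcal{A}'$ is not a demand policy. You would need a lazification lemma proved in this pinned, full-frame setting that also preserves agreement with farthest-next-use on the first $m+1$ evictions. Citing the standard unpinned lemma does not supply this.

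The paper avoids all of this by serialising each tuple \emph{twice}: $V_1,V_1,V_2,V_2,\ldots$.
\begin{enumerate}
\item A pin-respecting policy can pay all its misses for $V_i$ in the first copy and hit the whole second copy. Its promotion count therefore equals its fault count on the doubled string.
\item Conversely, ordinary farthest-next-use on the doubled string never evicts a current target during the first copy, since every current target is next used inside the block, before any non-target. It never faults on the second copy.
\item It makes the same choices as the batched rule, because doubling preserves the next-use order of unpinned frames.
\end{enumerate}
The pinned minimum thus equals the ordinary offline-paging minimum, and B\'el\'ady's optimality applies unchanged.

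With a single copy, this equivalence fails: unconstrained farthest-next-use may evict an already-served target of the current tuple. That is why your route must re-prove the exchange argument under pinning, with the roles of $u$ and $v$ corrected, a pinned-$u$ case added, and the lazification step established.
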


\begin{proof}[Justification]
Let $V_i$ denote the ordered target tuple of the $i$-th fused operator. Expand each tuple into its constituent qubit references, write that list twice, and concatenate the resulting lists. The block structure of this ordinary scalar reference string is $V_1,V_1,V_2,V_2,\ldots$. Every reference in the first copy of a block has its next use inside that block, whereas a frame not targeted by the block is next used no earlier than the following block; farthest-next-use therefore never evicts a current target while the first copy is being served, and the second copy is all hits. Doubling also preserves the next-use order of all unpinned frames. Conversely, a pin-respecting batched policy can pay its batch misses while serving the first copy and then serve the second copy entirely from the resulting cache, so its batched promotion count equals its fault count on the doubled string. The batched pinned problem therefore has the same minimum as ordinary offline paging on the doubled string. Farthest-next-use eviction, introduced by B\'el\'ady~\cite{belady1966study} and proved optimal in Mattson et al.'s framework~\cite{mattson1970evaluation}, minimises the fault count on that string.
\end{proof}

The model minimises promotion count only: it prices neither message shape, topology, protocol overhead, nor overlap. The production GPU-MPI planner approximates farthest-next-use over a 1024-entry schedule window and restricts victims to the first $\Qloc-1$ local positions. CPU-MPI uses \emph{first-free} placement, the lowest-indexed unpinned local position. On the strong-scaling fused stream, the GPU window sees all remaining references, while the restricted victim set still separates the implementation from the corollary; \S\ref{sec:eval-placement} quantifies the policies.

Placement selects which axes to promote; chunked dispatch moves them through intra-rank index swaps and XOR-partner or coalesced cross-rank transfers. Diagonal and controlled operations retain their fast paths, and a full-permutation fallback may use \code{MPI\_Alltoallv}.

\subsection{Backend lowering through the shared primitive contract (R4)}\label{sec:design-divergences}

After the router selects a method and the layout layer resolves its targets, the backend executes the four backend primitives. Tab.~\ref{tab:backend-matrix} lists the implementations used by the common state-vector path.

\begin{table*}[t]
\caption{Representative lowerings and primitives for the full-state path. Backends preserve fused-operator semantics and logical layout but may choose different specialised kernels, libraries, arithmetic modes, and transports.}
\label{tab:backend-matrix}
\centering
\footnotesize \setlength{\tabcolsep}{3pt}
\renewcommand{\arraystretch}{1.12}
\begin{tabular}{@{}>{\raggedright\arraybackslash}p{0.16\textwidth}
                >{\raggedright\arraybackslash}p{0.25\textwidth}
                >{\raggedright\arraybackslash}p{0.26\textwidth}
                >{\raggedright\arraybackslash}p{0.25\textwidth}@{}}
\toprule
Operation & CUDA / NVIDIA & HIP / AMD & CPU \\
\midrule
Target-axis permutation &
\texttt{cuTensor} permutation &
Custom complex-valued permutation kernel; the evaluated ROCm stack lacks the
required \texttt{hipTensor} datatype path &
Host permutation \\
\addlinespace
Fused update &
Diagonal/controlled kernels, small-width kernels, or \texttt{cuBLAS} fallback &
Diagonal/controlled kernels, custom dense kernels, or \texttt{hipBLAS}/\texttt{hipBLASLt} fallback &
In-place diagonal and small-width kernels; host permutation with OpenBLAS or NVPL GEMM fallback \\
\addlinespace
Diagonal updates and sampling &
Device kernels; single-GPU two-level sampler with $2^{10}$-entry blocks and
FP64 block totals &
Device kernels; the same single-GPU sampler &
Host loops; FP64 cumulative distribution with $2^{16}$-entry blocks in the local CPU state-vector engine \\
\addlinespace
Distributed bit transpose &
XOR-partner and coalesced exchanges through HPC-X OpenMPI/UCX &
XOR-partner and coalesced exchanges through Cray MPICH and its Heterogeneous System Architecture (HSA) GPU Transfer Library (GTL) &
First-free placement with pairwise or full-permutation MPI lowerings \\
\bottomrule
\end{tabular}
\end{table*}

One visible backend difference is target-axis permutation. CUDA uses \texttt{cuTensor}, whereas the evaluated ROCm 7.2.3 stack requires a custom complex-valued HIP kernel because \code{hiptensorPermute} returns \code{NOT\_SUPPORTED} for the required datatype. GEMM precision, sampling-kernel choice, and MPI transport likewise vary below the common execution path.

\paragraph{Sampling}

For a state $\ket{\psi}=\sum_{x=0}^{2^N-1}\psi_x\ket{x}$, state-vector engines draw computational-basis outcome $x$ with probability $p(x)=|\psi_x|^2$ through a hierarchical cumulative distribution function (CDF). CUDA, HIP, CPU, and MPI paths use different block decompositions, but retain block totals and inter-rank offsets in FP64; the MPI-GPU within-block prefixes remain FP32 to bound memory. Inverse-CDF search uses a strict upper bound, which prevents selection of a zero-probability bin. Appendix~\ref{appendix:reproducibility} records the concrete decompositions and numerical precision used in the evaluated builds.

\paragraph{CPU lowerings}

The CPU state-vector engine can apply a narrow fused block directly to amplitude groups instead of first permuting its target axes. Fixing the $N-k$ non-target bits identifies one independent group of $2^k$ amplitudes. A direct kernel gathers this group in the ordered-target convention of R2, computes the matrix--vector product with $G$, and writes the result back. Independence between groups permits OpenMP parallelism. Within a group or across adjacent groups, the same short sequence of complex multiply-adds recurs, making it suitable for single-instruction, multiple-data (SIMD) execution. Explicit NEON kernels cover one- and two-qubit updates, AVX2 covers one-qubit updates, and the compile-time matrix dimensions for $k=3,4,5$ allow the compiler to unroll and vectorise the loops; scalar implementations remain available.

For a narrow block, a target-axis permutation, an auxiliary state buffer, and a general matrix multiplication call can cost more than the local arithmetic. The dispatcher therefore streams diagonal blocks in place, uses direct dense kernels for $k\le3$ and for four- and five-qubit blocks beyond a state-size crossover, and otherwise uses host permutation followed by complex GEMM. Direct kernels assume the canonical flat-index layout, where qubit $q$ occupies index bit $q$, so the engine first materialises any pending lazy permutation. The generic branch may instead leave its new order represented in $L$. Both branches implement the same R2 semantics. Appendix~\ref{appendix:divergences} gives the concrete crossover used in the evaluated builds.

The tableau's bitwise operations on 64-bit integer arrays have a separate instruction-set dispatch across AVX-512, AVX2, SVE/SVE2, NEON, and scalar implementations. CPU-MPI applies the same local state-vector dispatch after R3 brings the targets into local index bits; its first-free victim policy remains the one defined in \S\ref{sec:methods-mpi}.

\subsection{Complexity summary}

Representation determines the exponential state size, while fusion and placement determine arithmetic intensity and data movement. Tab.~\ref{tab:complexity-summary} collects the bounds derived above.

Bounded-support fusion, GEMM-based gate application, tableau updates, and high-bit/low-bit state partitioning are established techniques~\cite{smelyanskiy2016qhipster,haner2017half,zhang2021hyquas,aaronson2004improved,juqcs2018}. \qsim combines them behind representation and layout contracts. The distinct distributed contribution is the explicit placement invariant and the offline-paging reduction under the model of Corollary~\ref{cor:belady}. \S\ref{sec:related} compares this design with UniQ, Atlas, and Lazy Qubit Reordering.

\section{Evaluation}\label{sec:eval}

The evaluation follows the architecture across three changes of regime: semantic routing between tableau and full-state execution, the fused-operator contract across NVIDIA and AMD backends, and explicit layout from local to distributed execution. Tab.~\ref{tab:eval-glance} collects the main results; the subsections that follow give the workloads and measurement details.

\begin{table*}[t]
\caption{Summary of the reported intervals (baseline short names: \S\ref{sec:eval-setup}). Appendix~\ref{appendix:reproducibility} defines the timing and repetition policies.}
\label{tab:eval-glance}
\centering\footnotesize
\begin{tabular}{@{}>{\raggedright\arraybackslash}p{0.18\textwidth}>{\raggedright\arraybackslash}p{0.14\textwidth}>{\raggedright\arraybackslash}p{0.60\textwidth}@{}}
\toprule
Target & Primary baseline & Main result \\
\midrule
Synthetic routing & Stim & At $N=30$, routing cuts \qsim's noisy-GHZ interval from 454.8~s to 5.8~ms and the tableau reaches $N=1000$; Stim stays ahead in every cell \\
\ella single device & \aer-noCuQ & On random brickwork, \qsim records the lowest Hopper interval at $N=30$ and 32 and the lowest Grace FP64 interval at $N=28$ and 30; \qsimg-CPU is lower on all four Grace cells \\
\setonix single device & \aer-GPU & On random brickwork, \qsim leads the four-engine MI250X comparison at $N=24$--30; on EPYC it leads the FP64 engines at $N=24$, while \aer leads FP64 at $N=26$--30 and \qsimg-CPU leads overall \\
\ella strong scaling & \aer-noCuQ & \aer-noCuQ takes the GPU lead from $P=4$, and QuEST the CPU lead from $P=4$ \\
\setonix strong scaling & \aer-GPU & \qsim is ahead of both comparators at every measured GPU and CPU rank \\
256-rank GPU weak scaling & \aer-GPU & Both reach $N=38$ (2~TiB); \qsim leads through $P=32$ and \aer from $P=64$ \\
256-rank CPU weak scaling & \aer-CPU & \qsim reaches $N=36$ (1~TiB), stays ahead of \aer throughout, and leads QuEST from $P=4$ (measured to $P=128$) \\
\bottomrule
\end{tabular}
\end{table*}

\subsection{Experimental setup}\label{sec:eval-setup}

\paragraph{Platforms} \ella is a Pawsey GH200 system with one Grace CPU and one Hopper GPU per node, ConnectX-7 RoCEv2, and HPC-X OpenMPI. \setonix combines MI250X GPU nodes with HPE Slingshot~11 and Cray MPICH; each node exposes eight GPU compute dies (GCDs). Its CPU-only partition uses EPYC 7763 nodes. Appendix~\ref{appendix:artifact} records the hardware and software environment.

\paragraph{Baselines} The state-vector comparison includes Qiskit \aer~\cite{aleksandrowicz2019qiskit}, \qsimg~\cite{isakov2021qsim}, Qulacs~\cite{suzuki2021qulacs}, PennyLane \lightning~\cite{asadi2024lightning}, and QuEST~\cite{jones2019quest}; the Clifford experiments add \aer's stabiliser method and Stim~\cite{gidney2021stim}. \aer-noCuQ is \aer's open GPU backend without \custate, \aer-cuQ links \aer to \custate, and \aer-GPU on \setonix is the ROCm/HIP build.

All tools receive the same raw \openqasm inputs without external transpilation. \aer, \lightning, and \qsimg parse them through framework front ends, Qulacs uses its converter, and QuEST uses the benchmark bridge. The \qsimg-GPU rows report the tool-native sampled-run interval, including state download and sampling.

\paragraph{Workloads} State-vector sweeps use 50-layer random brickwork, with alternating layers of random single-qubit rotations and nearest-neighbour two-qubit gates, \code{rand\{24,26,28,30,32\}x50}, and the quantum Fourier transform (QFT), \code{qft\{24,26,28,30,32\}}. The \setonix paired sweep additionally covers $N=18$--$22$ below the range shared by all four GPU tools. The MI250X sweep ends at $N=30$; \code{rand32x50} is reported separately as a managed-memory capacity case. The audited gate counts also show \code{rand26x50} to be a lighter instance (1{,}951 operations against 2{,}410--3{,}010 at $N=24$, 28, and 30), so per-size readings at $N=26$ partly reflect the instance. Synthetic GHZ and random-Clifford circuits exercise semantic routing. Application tests cover the quantum approximate optimisation algorithm (QAOA)~\cite{farhi2014qaoa}, unitary coupled-cluster singles-and-doubles (UCCSD)-style Pauli rotations~\cite{peruzzo2014variational}, and QASMBench~\cite{li2023qasmbench}.

\paragraph{Comparison protocol} Except for the application study, the cross-tool figures use each simulator's internal execution interval. \qsim includes gate execution and sampling; \aer times \code{.run()} after construction; QuEST times its gate-evolution loop, excluding environment setup and sampling, and every QuEST interval below uses that boundary. The application study uses its separately predefined runner boundaries (\S\ref{sec:eval-applications}). The paired \ella single-GPU and \setonix single-GCD comparisons use one exclusive allocation, vary the tool order between repetitions, and report five-run medians. Table captions identify medians and individual observations for the other series; Appendix~\ref{appendix:reproducibility} consolidates the timing, shot-count, repetition, and reduction policies.

\subsection{Semantic routing on synthetic Clifford workloads}\label{sec:eval-stab}

Clifford circuits can use a polynomial-size tableau instead of an exponential state vector. We submit noisy GHZ and random-Clifford inputs through the generic \code{gpu} entry point and test whether \qsim routes them before GPU allocation. The GHZ sweep attaches depolarising noise with $p_{\mathrm{dep}}=10^{-2}$ after the Hadamard and each CX gate, while the random-Clifford inputs are noiseless. Using one common rate is a benchmark choice rather than a routing requirement: supported Pauli-channel parameters may vary by circuit location. Timings are reported on \ella; the routing decision depends on the circuit, noise model, and requested output semantics rather than the hardware backend.

\begin{figure}[t]
\centering
\includegraphics[width=\linewidth]{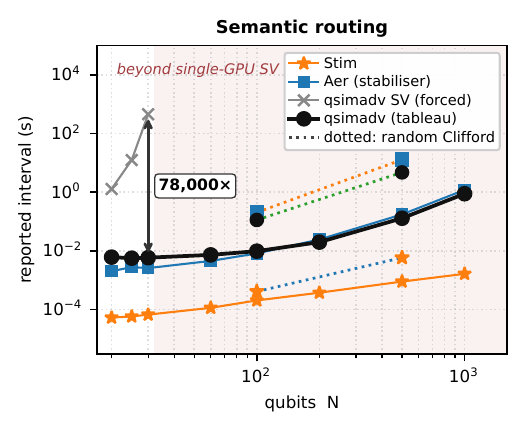}
\caption{Synthetic Clifford routing on \ella. Colour and marker shape identify the simulator; solid curves show noisy GHZ inputs and dotted curves show random-Clifford inputs. The forced single-GPU state-vector path ends at $N=30$, whereas automatic tableau routing reaches $N=1000$. Values are listed in Tab.~\ref{tab:qec-stab}.}
\label{fig:qec}
\end{figure}

Semantic routing changes the feasible problem size rather than merely shortening a fixed run (Fig.~\ref{fig:qec}). The forced single-GPU state-vector path ends at $N=30$, where it already takes 454.8~s. The same generic \code{gpu} request takes 5.8~ms after routing and proceeds to $N=1000$ in 0.884~s, far beyond the memory limit of full-state allocation, without requiring the user to choose a stabiliser engine. The routed path also overtakes \aer's stabiliser sampler beyond approximately 200 qubits on GHZ and at both random-Clifford sizes, by up to $2.7\times$ on \code{randcliff500}. Stim remains fastest in every measured cell.

As an output sanity check, the three tools' all-zero GHZ frequencies differ by at most 1.02 independent-binomial standard errors. This is one sampled marginal, not complete state fidelity or a check of the random-Clifford distributions (Appendix~\ref{appendix:qec-detail}).

\subsection{The fused-operator interface across backends}\label{sec:eval-3way}

The same fused-operator contract drives four backends: Hopper GPU and Grace CPU on \ella, and MI250X GPU and EPYC CPU on \setonix. Each backend supplies its own permutation, matrix, and structured-gate kernels (Tab.~\ref{tab:backend-matrix}). The random-brickwork comparisons occupy panels (a) and (b) of Figs.~\ref{fig:ella} and~\ref{fig:setonix}; the strong-scaling panels (c) and (d) are analysed separately in \S\ref{sec:eval-mpi}. QFT results follow in \S\ref{sec:eval-qft}.

\begin{figure*}[t]
\centering
\includegraphics[width=\textwidth]{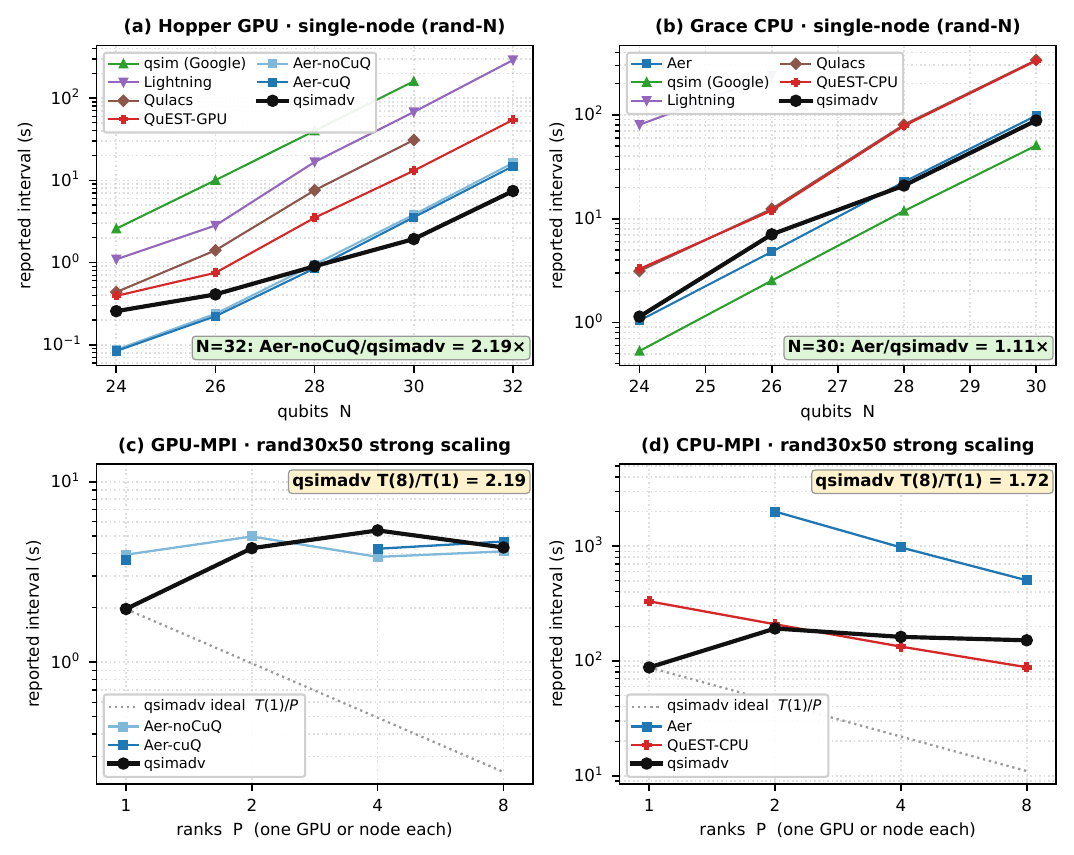}
\caption{\ella execution intervals: random-brickwork single-node results on (a) Hopper GPU and (b) Grace CPU, followed by \code{rand30x50} strong scaling on (c) GPU-MPI and (d) CPU-MPI. Here $T(P)$ denotes \qsim's interval on $P$ ranks; the dotted $T(1)/P$ reference is ideal strong scaling. \lightninggpu and Qulacs-GPU use FP64; other GPU series use FP32.}
\label{fig:ella}
\end{figure*}

\begin{figure*}[t]
\centering
\includegraphics[width=\textwidth]{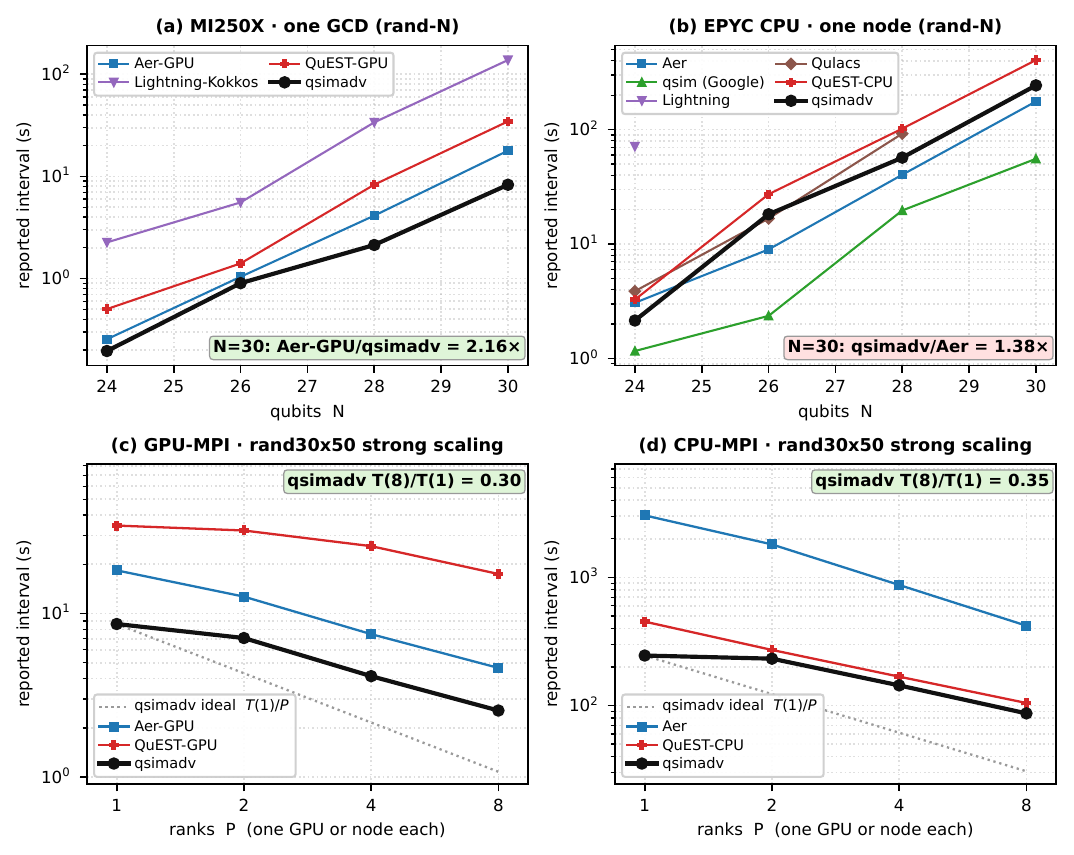}
\caption{\setonix execution intervals: random-brickwork results on (a) one MI250X GCD and (b) one EPYC CPU node, followed by \code{rand30x50} strong scaling on (c) GPU-MPI and (d) CPU-MPI. Here $T(P)$ denotes \qsim's interval on $P$ ranks; the dotted $T(1)/P$ reference is ideal strong scaling. \lightningkokkos uses FP64; other GPU series use FP32.}
\label{fig:setonix}
\end{figure*}

\paragraph{Hopper GPU} Fig.~\ref{fig:ella}(a) and Tab.~\ref{tab:3way-gpu} compare seven GPU paths. At $N=24$ and 26, both \aer variants are faster than \qsim, while \qsim is already ahead of \qsimg-GPU, \lightninggpu, Qulacs-GPU, and QuEST-GPU. At $N=28$, \qsim overtakes the open \aer-noCuQ path but remains 6\% behind the \custate-backed \aer-cuQ path. It records the lowest interval in the panel at $N=30$ and 32: the \aer-noCuQ-to-\qsim ratios are 1.98 and 2.19, and the \aer-cuQ-to-\qsim ratios are 1.84 and 2.00. QuEST-GPU, the next non-\aer baseline at these sizes, takes 6.8 and 7.3 times the \qsim interval. The five paired observations in each \qsim/\aer cell span at most $2.2\%$ of the median, except at $N=24$, where the span is $3.9\%$.

The Hopper evaluation also includes a fusion-cap sweep on \code{rand32x50}, the largest random circuit. Wider fusion reduces the number of full-state updates, but increases the fused-matrix size and construction cost. One run at each of $K_{\max}=5,6,7$ gives intervals of 12.35, 8.51, and 7.39~s, respectively. At this state size the sweep favours $K_{\max}=7$, the value used by the CUDA path; Appendix~\ref{appendix:fusion-sensitivity} gives the full size-dependent and backend rules.

The precision bridge in Tab.~\ref{tab:precision-bridge} adds a separate FP64 \aer sweep to the paired Hopper data. Its \aer-to-\qsim time ratios on \code{rand24x50}, \code{rand26x50}, and \code{rand28x50} are 0.49, 1.01, and 1.89, respectively.

\paragraph{Grace CPU} Fig.~\ref{fig:ella}(b) and Tab.~\ref{tab:3way-cpu} show a second large-$N$ crossover. \qsim trails \aer at $N=24$ and 26, overtakes it at $N=28$, and retains the lowest FP64 interval at $N=30$. \lightning, Qulacs, and QuEST are slower than \qsim at all four plotted widths. The mixed-precision ordering is different: \qsimg-CPU uses FP32 and records the lowest interval in every cell.

\begin{table*}[!t]
\caption{Single-GPU intervals on \ella. Values are intervals in seconds; bold marks the lowest available value in each row. The three leftmost columns are five-run paired medians; the other four are single observations. ``T/O'' denotes timeout; ``OOM'' marks Qulacs-GPU cells whose two-buffer footprint exceeds device memory.}
\label{tab:3way-gpu}
\centering\footnotesize
\resizebox{\textwidth}{!}{%
\begin{tabular}{@{}lrrrrrrr@{}}
\toprule
Circuit       & \qsim & \aer-noCuQ & \aer-cuQ & \lightninggpu & \qsimg-GPU & Qulacs-GPU & QuEST-GPU \\
\midrule
\code{rand24x50}  & 0.256 & 0.087 & \good{0.084} & 1.095 & 2.581 & 0.437 & 0.393 \\
\code{rand26x50}  & 0.410 & 0.238 & \good{0.222} & 2.825 & 10.04 & 1.409 & 0.751 \\
\code{rand28x50}  & 0.901 & 0.949 & \good{0.850} & 16.72 & 40.06 & 7.565 & 3.506 \\
\code{rand30x50}  & \good{1.932} & 3.821 & 3.554 & 67.82 & 160.0 & 30.96 & 13.10 \\
\code{rand32x50}  & \good{7.415} & 16.27 & 14.85 & 290.4 & T/O & OOM & 54.40 \\
\code{qft24}      & 0.231 & 0.020 & \good{0.018} & 0.054 & 2.479 & 0.132 & 0.046 \\
\code{qft26}      & 0.267 & 0.068 & \good{0.052} & 0.173 & 9.885 & 0.552 & 0.104 \\
\code{qft28}      & 0.419 & 0.277 & \good{0.201} & 0.668 & 39.52 & 2.255 & 0.347 \\
\code{qft30}      & 0.930 & 1.221 & \good{0.861} & 2.853 & 158.1 & 9.510 & 1.338 \\
\code{qft32}      & \good{3.744} & 5.398 & 3.817 & 12.62 & T/O & OOM & 5.705 \\
\bottomrule
\end{tabular}}
\end{table*}

\paragraph{MI250X GPU} Fig.~\ref{fig:setonix}(a) and Tab.~\ref{tab:setonix-engines} show \aer-GPU ahead at $N=18$--22, followed by a crossover between $N=22$ and 24. From $N=24$ through 30, \qsim records the lowest interval among all four GPU engines. The \aer-to-\qsim ratio ranges from 1.15 to 2.16 over those shared sizes. At $N=30$, QuEST-GPU takes 4.2 times the \qsim interval and the FP64 \lightningkokkos path takes 16.6 times the interval.

\begin{table}[!htp]
\caption{Single-GCD intervals on \setonix. Values are intervals in seconds; bold marks the lowest available value in each row. \qsim and \aer-GPU are five-run paired medians; QuEST-GPU and \lightningkokkos are single observations. \lightningkokkos uses FP64 and the other columns FP32. Rows below $N=24$ were not measured by every tool; ``N/M'' denotes an unmeasured cell.}
\label{tab:setonix-engines}
\centering\small
\resizebox{\columnwidth}{!}{%
\begin{tabular}{lrrrr}
\toprule
Circuit          & \qsim & \aer-GPU & QuEST-GPU & \lightningkokkos \\
\midrule
\code{rand18x50} & 0.093 & \good{0.027} & N/M & 0.249 \\
\code{rand20x50} & 0.098 & \good{0.038} & N/M & 0.317 \\
\code{rand22x50} & 0.121 & \good{0.080} & N/M & 0.565 \\
\code{rand24x50} & \good{0.195} & 0.254 & 0.502 & 2.259 \\
\code{rand26x50} & \good{0.901} & 1.040 & 1.406 & 5.562 \\
\code{rand28x50} & \good{2.131} & 4.130 & 8.302 & 33.82 \\
\code{rand30x50} & \good{8.283} & 17.86 & 34.56 & 137.8 \\
\code{qft18}     & 0.081 & \good{0.006} & N/M & 0.033 \\
\code{qft20}     & 0.082 & \good{0.007} & N/M & 0.039 \\
\code{qft22}     & 0.086 & \good{0.013} & N/M & 0.055 \\
\code{qft24}     & 0.105 & \good{0.037} & 0.046 & 0.104 \\
\code{qft26}     & 0.194 & \good{0.140} & 0.178 & 0.311 \\
\code{qft28}     & \good{0.594} & 0.604 & 0.776 & 1.225 \\
\code{qft30}     & \good{2.576} & 2.649 & 3.488 & 5.244 \\
\bottomrule
\end{tabular}}
\\[2pt]{\footnotesize The \code{rand32x50} capacity cell is discussed in Appendix~\ref{appendix:resource-profile}.}
\end{table}

\paragraph{EPYC CPU} Fig.~\ref{fig:setonix}(b) and Tab.~\ref{tab:setonix-cpu-3way} complete the four-backend comparison. At $N=24$, \qsim has the lowest FP64 interval. \aer takes the FP64 lead from $N=26$ and is $1.38\times$ faster than \qsim at $N=30$. Qulacs also edges \qsim at $N=26$, but \qsim is the next FP64 engine after \aer at $N=28$ and 30 and remains ahead of QuEST at every plotted width. As on Grace, the FP32 \qsimg-CPU path records the lowest interval throughout the panel.

Across both GPU vendors, the crossover moves in \qsim's favour as $N$ grows, and \qsim leads the larger random-brickwork cells. The CPU ordering is less uniform: \qsim gains the large-$N$ FP64 lead on Grace but not on EPYC, while the FP32 \qsimg-CPU path remains fastest on both.

\subsection{Diagonal-heavy QFT}\label{sec:eval-qft}

QFT interleaves Hadamards with controlled phases and favours element-wise diagonal updates over generic TTGT. Fig.~\ref{fig:qft} plots the $N\ge24$ range shared across the systems. On Hopper, the crossing sits one sweep step above the random-brickwork one: the \aer-noCuQ-to-\qsim time ratio rises from 0.09 at \code{qft24} to 1.31 at \code{qft30} and 1.44 at \code{qft32}, while the corresponding \aer-cuQ ratios are 0.93 and 1.02 at the two larger sizes. On MI250X, \aer-GPU is $2.9\times$ and $1.4\times$ faster at $N=24$ and $N=26$, and \qsim takes the lower median at both larger sizes, where the paired medians differ by only 1.7\% and 2.8\%. \aer has the lowest FP64 time on every CPU QFT cell of both machines.

\subsection{Application circuits}\label{sec:eval-applications}

The non-Clifford application inputs exercise the full-state R2 path, while the Clifford \code{qec9xz\_n17} input exercises R1 on a public benchmark. The sweep reports five-run medians under its predefined runner policy: \qsim is timed over its whole process, while \aer and \lightning are timed after construction (Appendix~\ref{appendix:reproducibility}). The \qsim interval therefore includes stages excluded from the two baseline intervals. \lightning or \aer reports the lower interval on the smaller non-Clifford QAOA, UCCSD-style, and QASMBench inputs. At $N=30$ the QAOA ordering changes with depth: \aer is lower at one layer on both topologies and at two layers on the ring, while \qsim is lower at three layers on both and at two layers on the 3-regular graph. On \code{qec9xz\_n17}, the routed \qsim interval is lower than both GPU state-vector baseline intervals (Tab.~\ref{tab:app-workloads}).

\subsection{Backend costs}\label{sec:eval-resource}

On GH200, an external device trace shows comparable cuTensor-permutation and cuBLAS-GEMM time in \code{rand32x20}. On MI250X, \code{rand32x50} pages through managed memory and takes 627~s against 148~s for the matched-FP32 QuEST-GPU build; \qsim's two-buffer working set does not fit in GPU memory, whereas QuEST's in-place state does. The built-in \code{fastTranspose} host region covers $91\%$ of the profiled interval, but its stream synchronisation includes preceding asynchronous work, so the value bounds permutation time rather than attributing it to a kernel. Appendix~\ref{appendix:resource-profile} gives the memory accounting and profiler scope for both systems.

\subsection{Placement policy}\label{sec:eval-placement}

The placement comparison replays the 141-entry fused stream of \code{rand30x50} at $P\in\{2,4,8\}$ ($\Qloc=29,28,27$, respectively). With the identity initial layout at $P=8$, first-free placement requires 98 promotions, compared with 42 under all-frame farthest-next-use. A second replay uses the production initial layout and victim restriction. On this stream, the bounded production planner and its full-future counterpart both require 11, 26, and 39 promotions across $P\in\{2,4,8\}$; allowing all $\Qloc$ local positions as victims instead gives 11, 23, and 38. Thus the bounded planner matches the choices of the full-future replay within the production victim set for this stream. The remaining difference from the corollary's all-frame model comes from the restricted victim set.

\subsection{Distributed execution}\label{sec:eval-mpi}

For strong scaling, the circuit remains \code{rand30x50} while the number of ranks $P$ increases; panels (c) and (d) of Figs.~\ref{fig:ella} and~\ref{fig:setonix} show the GPU and CPU results. Weak scaling instead holds the rank-local state size fixed and adds one qubit whenever $P$ doubles; Figs.~\ref{fig:mpi-scaling} and~\ref{fig:weak256} show the small-rank and large-scale results. Both experiments include local execution, placement, transport, and MPI overhead.

\paragraph{GPU strong scaling} On \ella (Fig.~\ref{fig:ella}(c) and Tab.~\ref{tab:mpi-gpu}), \qsim has the lowest interval at $P=1$ and 2, then \aer-noCuQ takes the lead at $P=4$ and 8. The \qsim interval rises from 1.97~s on one rank to 4.32~s on eight, so additional GH200 nodes do not provide a net speedup for this fixed problem; at $P=8$, however, \aer-noCuQ is ahead by only 5\%. The \custate-backed \aer series is second at $P=1$, has no completed $P=2$ cell, leads \qsim at $P=4$, and trails it at $P=8$.

The \setonix result is qualitatively different (Fig.~\ref{fig:setonix}(c) and Tab.~\ref{tab:setonix-hip-mpi-detail}). All three engines improve from one to eight ranks, and \qsim is fastest at every rank. Its interval falls from 8.63 to 2.55~s, a $3.4\times$ strong-scaling speedup. Across the four rank counts, \aer-GPU takes 1.8--2.1 times the \qsim interval and QuEST-GPU takes 4.0--6.8 times the interval.

\paragraph{CPU strong scaling} On \ella (Fig.~\ref{fig:ella}(d) and Tab.~\ref{tab:mpi-cpu}), \qsim leads at $P=1$ and 2, but QuEST crosses below it at $P=4$. By $P=8$, QuEST takes 88.15~s against \qsim's 151.28~s. \aer is slower than \qsim at every completed rank, by factors from 3.3 to 10.4. On \setonix (Fig.~\ref{fig:setonix}(d) and Tab.~\ref{tab:setonix-mpi-cpu}), \qsim leads both competitors at every rank and improves from 245.46 to 86.83~s, a $2.8\times$ speedup. It stays 14--17\% below QuEST at $P=2$--8, while \aer takes 4.8--7.8 times the \qsim interval.

The platform split is therefore consistent across processor types: from one to eight ranks, \qsim obtains a net speedup on both \setonix paths, whereas both \ella paths end slower than their one-rank runs and cede the lead at $P=4$.

\paragraph{GPU weak scaling} The qsim-only small-rank comparison in Fig.~\ref{fig:mpi-scaling} separates the two systems: at $P=8$, the normalised interval is $2.45\times$ its one-rank value on \setonix and $11.8\times$ on \ella. The larger \setonix comparison in Fig.~\ref{fig:weak256}(a) and Tab.~\ref{tab:setonix-gpu-weak} reaches 256 ranks with $\Qloc=30$. \qsim leads \aer-GPU from $P=8$ through 32; the ordering reverses between $P=32$ and 64, and \aer-GPU remains ahead through $P=256$. Both engines reach $N=38$, corresponding to a 2~TiB FP32 state. From $P=32$ to 256, the \qsim interval grows by 52\%.

\paragraph{CPU weak scaling} In Fig.~\ref{fig:mpi-scaling}, the \qsim interval at $P=8$ is $3.0\times$ its one-rank value on \setonix and $14.0\times$ on \ella. Fig.~\ref{fig:weak256}(b) and Tab.~\ref{tab:setonix-cpu-weak} extend the \setonix comparison at $\Qloc=28$. QuEST is narrowly ahead at $P=2$, taking 38.63~s versus \qsim's 39.30~s; \qsim takes the lead at $P=4$ and retains it through the last measured QuEST cell at $P=128$. \qsim is ahead of \aer throughout, with the \aer-to-\qsim ratio narrowing from 8.5 at $P=2$ to 3.4 at $P=256$. At that final rank count, \qsim reaches $N=36$, a 1~TiB FP64 state.

\begin{figure}[t]
\centering
\includegraphics[width=\columnwidth]{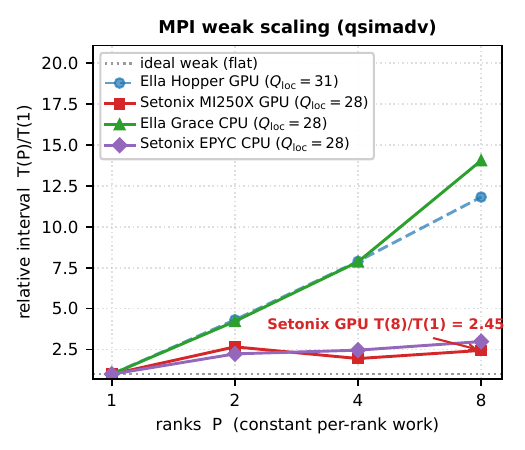}
\caption{\qsim small-rank MPI weak scaling on 50-layer random brickwork, normalised as $T(P)/T(1)$, where $T(P)$ is the interval on $P$ ranks. Solid series hold $\Qloc=28$ on \setonix GPU and both CPUs, the dashed \ella GPU series holds $\Qloc=31$, and the dotted line is ideal. The \ella Grace CPU series rises fastest, to $14\times$ at $P=8$.}
\label{fig:mpi-scaling}
\end{figure}

\begin{figure}[t]
\centering
\includegraphics[width=\columnwidth]{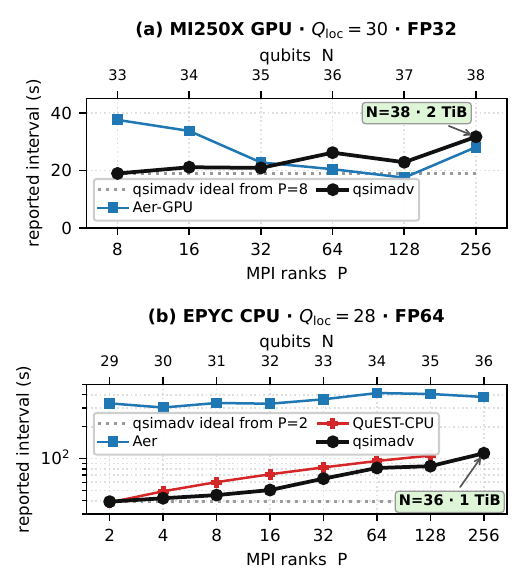}
\caption{\setonix weak scaling to 256 ranks on depth-20 random brickwork: (a) GPU at $\Qloc=30$ and FP32 reaches $N=38$ (2~TiB); (b) CPU at $\Qloc=28$ and FP64 reaches $N=36$ (1~TiB).}
\label{fig:weak256}
\end{figure}

\section{Related work}\label{sec:related}

We compare \qsim along two axes: when a simulator selects its representation and where vendor-specific code enters the state-vector path.

\subsection{NVIDIA \textup{\cuq} and its consumers}

The NVIDIA \cuq SDK provides \custate for state-vector operations and \custn for tensor-network contraction~\cite{bayraktar2023cuquantum}. \aer's GPU backend~\cite{wood2024aerprivacy}, \lightninggpu~\cite{asadi2024lightning}, \qsimg-GPU~\cite{isakov2021qsim}, and CUDA-Q~\cite{brown2026multigpu} use \custate for their fastest NVIDIA state-vector paths. The SDK delivers high throughput on NVIDIA systems but does not define an AMD backend.

Brown et al.~\cite{brown2026multigpu} study \cuq scaling across PCIe, NVLink, multi-node NVLink, and InfiniBand GPUDirect remote direct memory access (RDMA). They also discuss CUDA-to-HIP changes for inter-process communication, but report no AMD measurements. Our AMD implementation therefore starts below the quantum-library layer: ROCm supplies matrix, permutation, and communication primitives, while \qsim supplies the gate schedule and layout semantics.

\subsection{Open-source state-vector simulators}

\paragraph{\textup{\aer}} \aer is the main multi-representation baseline within Qiskit~\cite{aleksandrowicz2019qiskit}. Its default \code{automatic} method may select a stabiliser engine~\cite{wood2024aerprivacy}; an explicit state-vector/GPU method does not. \qsim extends inspection to eligible sampled-count requests made through its generic \code{dense} and \code{gpu} entry points. \aer uses Monte-Carlo trajectories for non-Clifford noise, while \qsim also includes an exact density-matrix path.

\paragraph{\textup{\lightning}} PennyLane \lightning~\cite{asadi2024lightning} supports NVIDIA and AMD GPUs through native CUDA and Kokkos-based paths. \qsim instead maps its tensor-axis interface directly to CUDA, HIP, and CPU primitives, providing a comparison between framework-level and simulator-interface portability.

\paragraph{\textup{\qsimg{}} (Google)} Google's Schr\"odinger-style \qsimg simulator~\cite{isakov2021qsim} provides native CUDA and HIP state-vector backends, a C++ MPS simulator, an optional \cuq path, and the hybrid Schr\"odinger--Feynman \code{qsimh} engine~\cite{qsimv022}. The companion qFlex work~\cite{villalonga2020establishing} provides the tensor-network precedent for \qsim's \code{cutn} engine.

\paragraph{\textup{\juqcs}} De Raedt et al. introduced a distributed simulator in 2007~\cite{deraedt2007}. \juqcs-E later used exact complex amplitudes, while \juqcs-A used an adaptive polar encoding to reach 48-qubit Shor simulations on Sunway TaihuLight~\cite{juqcs2018}. Its high-bit/low-bit state partition is a predecessor of \qsim's explicit distributed layout.

\paragraph{Multi-GPU systems} Communication-avoiding state-vector simulation spans the H\"aner--Steiger CPU implementation~\cite{haner2017half}, HyQuas~\cite{zhang2021hyquas}, UniQ~\cite{zhang2022uniq}, \nwqsim~\cite{li2023nwqsim}, and Atlas~\cite{xu2024atlas}. HyQuas includes a transpose-then-GEMM strategy, and Atlas re-stages logical-to-physical mappings between subcircuits with an integer-linear-programming partitioner. \qsim retains each resulting permutation in a per-gate layout map and implements the same contract on CUDA and HIP.

Lazy Qubit Reordering~\cite{tabuchi2024lazy} also defers cross-rank movement through time-space tiling. \qsim contributes a placement invariant and an optimal promotion count for its ideal demand model; its production policy remains a bounded-lookahead heuristic. Q$^2$Chemistry~\cite{zhong2025q2chemistry} adds buffered overlap and dependency-aware contraction on CUDA/NCCL.

Tab.~\ref{tab:feature-matrix} compares execution modes, automatic stabiliser routing, and direct or framework-mediated HIP support.

\begin{table*}[t]
\caption{Feature comparison across selected open-source quantum simulators. Single-GPU denotes a non-distributed GPU backend. \checkmark = supported; $\sim$ = supported through a portability framework or compiler vectorisation; $\circ$ = path-dependent; NV = NVIDIA-only; $\times$ = not supported.}
\label{tab:feature-matrix}
\centering\scriptsize
\begin{tabular}{l c c c c c c c c c c c}
\toprule
Simulator & MPI & Multi- & SIMD & Single- & HIP & direct & \openqasm & Stab. & Density & MPS & Tensor- \\
          &     & GPU    & CPU  & GPU     &     & HIP & native & auto-route & matrix &     & network \\
\midrule
\aer~\cite{wood2024aerprivacy}                            & \checkmark & \checkmark & \checkmark & \checkmark & \checkmark  & \checkmark & via Qiskit  & $\circ$  & \checkmark & \checkmark & NV \\
\lightning~\cite{asadi2024lightning}                      & \checkmark & \checkmark & \checkmark & \checkmark & $\sim$      & $\times$ & via PennyLane & $\times$ & $\times$  & NV        & NV \\
\qsimg~\cite{isakov2021qsim,qsimv022}                    & $\times$  & $\circ$   & \checkmark & \checkmark & \checkmark  & \checkmark & via Cirq    & $\times$ & $\times$  & \checkmark  & $\times$ \\
\juqcs~\cite{juqcs2018}                                  & \checkmark & $\times$  & \checkmark & $\times$  & $\times$    & $\times$ & via translator & $\times$ & $\times$  & $\times$  & $\times$ \\
NWQ-Sim (SV-/DM-Sim)~\cite{li2023nwqsim,li2020dmsim,nwqsimsoftware} & \checkmark & \checkmark & \checkmark & \checkmark & \checkmark  & \checkmark & \checkmark  & $\times$ & \checkmark & $\times$ & $\times$ \\
Atlas~\cite{xu2024atlas}                                  & \checkmark & \checkmark & $\times$   & \checkmark & $\times$    & $\times$ & \checkmark  & $\times$ & $\times$  & $\times$  & $\times$ \\
Q$^2$Chemistry~\cite{zhong2025q2chemistry}                & \checkmark & \checkmark & \checkmark & \checkmark & $\times$    & $\times$ & $\times$    & $\times$ & $\times$  & $\times$  & $\times$ \\
Qulacs~\cite{suzuki2021qulacs}                            & $\circ$   & $\times$  & \checkmark & \checkmark & $\times$    & $\times$ & $\circ$     & $\times$ & \checkmark & $\times$  & $\times$ \\
QuEST~\cite{jones2019quest}                               & \checkmark & \checkmark & \checkmark & \checkmark & \checkmark  & \checkmark & $\times$  & $\times$ & \checkmark & $\times$  & $\times$ \\
Intel-QS~\cite{smelyanskiy2016qhipster}                   & \checkmark & $\times$  & \checkmark & $\times$  & $\times$    & $\times$ & $\times$    & $\times$ & $\times$  & $\times$  & $\times$ \\
mpiQulacs~\cite{tabuchi2023mpiqulacs}                     & \checkmark & $\times$  & \checkmark & $\times$  & $\times$    & $\times$ & $\times$    & $\times$ & $\times$  & $\times$  & $\times$ \\
ProjectQ~\cite{steiger2018projectq}                       & $\times$  & $\times$  & \checkmark & $\times$  & $\times$    & $\times$ & $\times$    & $\times$ & $\times$  & $\times$  & $\times$ \\
MindQuantum~\cite{mindquantum_docs}                       & $\times$  & $\times$  & \checkmark & \checkmark & $\times$    & $\times$ & $\circ$     & $\times$ & \checkmark & \checkmark & $\times$ \\
Yao.jl~\cite{yao_docs}                                    & $\times$  & $\times$  & $\sim$    & \checkmark & $\times$    & $\times$ & $\times$    & $\times$ & \checkmark & $\times$  & $\times$ \\
\midrule
\textbf{\qsim (this work)}                                  & \checkmark & \checkmark & \checkmark & \checkmark & \checkmark & \checkmark & \checkmark & \checkmark & \checkmark & \checkmark & NV \\
\bottomrule
\end{tabular}

\vspace{2pt}
\begin{minipage}{\textwidth}\footnotesize
$\circ$~\aer{} Stab.: auto-selects \code{stabilizer} only under the default \code{method=automatic}; an explicit state-vector/GPU method suppresses it.\quad
$\circ$~\qsimg{} MultiGPU: only via NVIDIA \custate{} (\cuq); the native CUDA path is single-device.\quad
$\circ$~Qulacs MPI: dedicated \code{USE\_MPI} build only; power-of-two ranks, no measurement, incompatible with the GPU backend.\quad
$\circ$~Qulacs/MindQuantum \openqasm: in-tree converter (\code{qulacs.converter} / \code{mindquantum.io.OpenQASM}) for a limited gate subset, not a full parser.\\
\lightning{} HIP ($\sim$): AMD-GPU execution is provided by \lightningkokkos{} through the Kokkos HIP backend; \lightninggpu{} is NVIDIA-only.\quad
Yao SIMD ($\sim$): CPU vectorisation is supplied by the Julia/LLVM compiler rather than hand-written intrinsic kernels.\quad \\
\juqcs: the row reflects the cited CPU-only \juqcs-E/A~\cite{juqcs2018}; the later \juqcs-G line adds NVIDIA multi-GPU.
\end{minipage}
\end{table*}

\FloatBarrier

\subsection{Backend portability and native specialisation}

High-performance-computing portability frameworks such as Kokkos~\cite{kokkos2014}, OpenMP target offload~\cite{openmp51spec}, and SYCL~\cite{khronos2020sycl} compile a common kernel source for multiple targets; \lightningkokkos is a production quantum-simulation example~\cite{asadi2024lightning}. These approaches reduce the cost of supporting new devices. Our priority is per-platform performance on the supported target families. A permutation or dense update streams the full $2^N$-amplitude state, so a suboptimal lowering repeats its cost across every amplitude. We therefore do not place a common-source kernel model at the primary backend boundary.

Instead, \qsim makes the simulator interface portable and the lowering backend-specific. R2 and R3 fix the ordered-target, operator, and layout semantics, while each backend selects its native data movement, matrix library, arithmetic mode, fusion width, and transport. CUDA uses cuTensor for permutation; HIP uses a custom kernel because the evaluated hipTensor path does not support the required datatype; CPUs use their own direct kernels and NVPL or OpenBLAS. The small macro layer only maps operation names to vendor symbols at compile time. A new backend must implement the same contract, but need not inherit another backend's algorithmic choices. This accepts more backend code and validation in exchange for direct control of hardware-sensitive operations.

The CPU implementation follows the tiered instruction-set fallback pattern familiar from BLIS and OpenBLAS~\cite{vanzee2015blis}. State-vector direct kernels use NEON or AVX2 specialisations where available and otherwise rely on compiler-vectorised or scalar code. The tableau groups 64 row values in each unsigned integer and dispatches its bitwise XOR, XOR--AND, and XOR--AND-NOT operations across AVX-512, AVX2, SVE/SVE2, NEON, and scalar implementations; these operations follow Aaronson--Gottesman's analysis of tableau updates~\cite{aaronson2004improved}.

The evaluation shows why this control matters. In Fig.~\ref{fig:setonix}(a), the \lightningkokkos path is above the three FP32 paths at every shared random-brickwork width and takes 16.6 times the \qsim interval at $N=30$. Because \lightningkokkos uses FP64 and differs from \qsim above the kernel layer, the figure does not isolate Kokkos, but it does report the performance of the production portable path evaluated here. Within \qsim, the Hopper fusion sweep gives complementary evidence: changing the backend parameter $K_{\max}$ from 5 to 7 yields a $1.67\times$ speedup (\S\ref{sec:eval-3way}). Together, the system comparison and the in-system sweep motivate retaining direct control over each lowering.

\section{Discussion}\label{sec:discussion}

Semantic routing produces the largest change in the experiments because it replaces exponential state evolution with a polynomial tableau. R1 lets a sampled-count request through a generic full-state entry point reach the tableau path without an explicit representation choice. Once a circuit stays on the state-vector path, fusion, permutation, matrix arithmetic, sampling, and communication all contribute to the reported execution interval.

Both single-GPU random-brickwork sweeps exhibit a crossover: \aer is faster at smaller widths, whereas \qsim leads at larger widths, with the crossing occurring at different widths on the two systems. The measurements establish the crossover but do not isolate its component-level cause. Strong scaling likewise changes the ordering across fabrics: \qsim holds the GPU lead at every \setonix rank but cedes it from $P=4$ on \ella despite using the same logical layout scheme.

\subsection{Model and measurement limits}\label{sec:disc-gap}

Within Corollary~\ref{cor:belady}'s all-frame demand model, farthest-next-use minimises promotions for a fixed fused stream and initial layout, uniform costs, full future knowledge, and no prefetch. Production GPU-MPI instead uses bounded lookahead and a restricted victim set, while CPU-MPI uses first-free placement. The corollary is therefore a reference count, not a wall-time model; the measured MPI intervals cover the complete distributed path, including transport and local kernels, with no standalone timing ablation for the layout map.

Deterministic state comparisons under strict FP32 remain future work. For the MI250X \code{rand32x50} capacity case, an in-place single-buffer lowering would reduce the working set; separating permutation, dense updates, synchronisation, and paging would require HIP events or a device profile. The GPU-MPI ordering across the two fabrics motivates evaluating a specialised pairwise bit swap and topology-aware rank-bit assignment, neither of which is isolated by the current experiments.

\subsection{Architectural lessons}\label{sec:disc-lessons}

The architecture exposes decisions that simulators often hide inside an engine factory. Semantic routing is placed after circuit, noise, and output inspection but before state allocation, where it can change the representation without changing the request. Layout is stored as part of the simulator state, so local kernels and MPI dispatch can retain non-canonical orders without redefining logical qubits. The backend boundary is deliberately small: upper layers specify ordered targets, operators, and layouts, while native kernels, arithmetic modes, and transports remain backend choices. Together, semantic routing reaches the tableau path, while the operator and layout contracts span vendor-native full-state and distributed execution.

\section{Conclusion}\label{sec:conclusion}

\qsim treats simulator portability as controlled late binding, rather than a single implementation path imposed on every machine. Circuit, noise, and output semantics select the representation before memory is committed; ordered fused operators bind to concrete kernels only after their physical targets are known; the logical layout binds to data movement only when an operation demands non-local state. These contracts preserve circuit semantics while allowing representation, arithmetic, permutation, and transport to be selected independently where the required information becomes available.

The resulting architecture spans a polynomial tableau and exponential full-state execution, NVIDIA and AMD GPUs, Arm and x86 CPUs, and local and MPI-distributed memory without collapsing them into one lowest-common-denominator kernel. A backend implements a compact contract for matrix multiplication, permutation, and sampling, with exchange added for distributed execution. Dense arithmetic can therefore be delegated to mature, production-tested libraries such as cuBLAS, hipBLAS, NVPL, and OpenBLAS, while native kernels cover workloads that need a more specialised treatment. The single-device crossovers at larger problem sizes and the 256-rank runs show that the shared interface is not merely portable: it can retain competitive performance while exposing platform-specific capabilities.

The same separation provides a practical route to hardware beyond the systems evaluated here. A new HPC backend can reuse semantic routing, fusion, and logical layout, then bind the primitive contract to its matrix library, tensor operations, instruction set, and communication fabric. An embedded or edge backend can use the same upper layers with compact SIMD or accelerator kernels, and can avoid an exponential allocation whenever circuit semantics permit a smaller representation. Porting becomes a backend integration task rather than a rewrite of the simulator. More broadly, vendor independence need not require generic execution: it can be established by stable semantic contracts above the hardware boundary while performance-critical execution remains native below it.

\newpage
\appendices
\section{Artefact and experimental environment}\label{appendix:artifact}

The simulator and cluster recipes are hosted at \url{https://github.com/Quleaf/qsimadv}. In the simulator repository, inputs and build/run recipes are under \code{data/input/} and \code{tests/benchmark/}; the manuscript repository contains the source CSVs under \code{data/} and plotting scripts under \code{scripts/}. \code{data/README.md} maps each result family to its source files and campaign provenance and documents figure regeneration. 

\subsection*{Platforms}

\paragraph{\ella (NVIDIA Grace Hopper GH200)} Each node contains one GH200 Superchip with a 72-core Grace ARM CPU, a Hopper GPU with 96~GiB of third-generation high-bandwidth memory (HBM3), and coherent NVLink-C2C memory access. Inter-node communication uses dual ConnectX-7 interfaces over RoCEv2; the evaluated MPI path uses HPC-X OpenMPI over UCX with GPUDirect RDMA. Strong-scaling runs assign one node per rank.

\paragraph{\setonix (AMD MI250X with HPE Slingshot)} Each GPU node combines one 64-core EPYC 7A53 host with four MI250X cards, exposed as eight logical GPU compute dies (GCDs), each with 64~GiB HBM2e. Inter-node communication uses HPE Slingshot~11 and the libfabric CXI provider. The evaluated MPI path uses Cray MPICH 8.1.32; GPU-aware transfers use its HSA GPU Transfer Library (GTL).

\paragraph{\setonix CPU-only partition} The single-node CPU and CPU-MPI experiments use AMD EPYC 7763 ``Milan'' nodes from the \code{work} partition. Each node presents 128 physical cores across eight non-uniform memory access domains; the experiments assign 64 OpenMP threads per rank. These results are reported separately from those obtained on the EPYC 7A53 ``Trento'' hosts of the GPU nodes.

\subsection*{Software}

The \qsim NVIDIA build uses NVHPC SDK 26.3 (CUDA 13.1, cuBLAS, cuTensor 2.6, NVPL, and HPC-X), gcc 13.3.0, Eigen 3.4.0, and Boost; host BLAS is NVPL. The Hopper \aer baselines use CUDA 12.8 and Python 3.10.14. The AMD build uses ROCm 7.2.3, Cray MPICH, the site Eigen/OpenBLAS/Boost modules, and gcc 14.2.0. The paired \ella single-GPU and \setonix single-GCD \qsim builds use revisions \code{07fef3aa} and \code{acaeca61}, respectively; \code{data/README.md} records the full hashes. Baseline versions are \aer 0.17.2, QuEST 4.3, Qulacs 0.6.13, \lightninggpu 0.44.0, and \lightningkokkos 0.45.0.

\subsection*{Data and provenance}

Workload circuits are committed as \openqasm files. A fixed-seed generator reproduces the random-brickwork inputs \code{rand\{N\}x\{d\}}; GHZ circuits and small fixtures are committed verbatim. Raw application rows are included in the manuscript repository, and raw rows for the paired \ella single-GPU and \setonix single-GCD trials are versioned in the simulator repository. For the remaining series, the public artefact contains the aggregate CSVs and provenance records used by the paper; the authors retain the scheduler logs.

The measurements can be rerun on \ella and \setonix or comparable HPC systems, although exact timings remain system-dependent. Appendix~\ref{appendix:reproducibility} defines the timing, repetition, and numerical protocols.

\section{Measurement protocol}\label{appendix:reproducibility}

\subsection*{Timing and repetitions}

\qsim's \code{compute\_time\_s} includes gate execution, distributed exchange, and terminal sampling; it excludes process launch, parsing, fusion-plan construction, and result output. \aer times \code{.run()} after circuit parsing and simulator construction, with a warm call where noted. \lightning times QNode execution after construction. QuEST times its gate-evolution loop and excludes environment/register creation, initialisation, and sampling. Cross-tool ratios compare these internal intervals rather than a common end-to-end timer.

The application study retains its predefined runner boundaries: \qsim is measured by process wall time, including launch, parsing, fusion planning, execution, sampling, and output, whereas \aer and \lightning use the post-construction intervals above. The application table therefore compares medians under a tool-native timing policy.

Unless a caption states otherwise, each table cell is one timed observation. The paired \ella single-GPU and \setonix single-GCD sweeps, the application workloads, the \qsim GPU-MPI strong-scaling series, and the \setonix GPU small-rank curve at $\Qloc=28$ report five-run medians; the paired sweeps alternate tool order. The \qsim and QuEST 256-rank weak-scaling series report three-run medians, with each QuEST run first reduced to the maximum time across ranks. Timeouts are marked and excluded from summaries.

Most runs use $S=1000$ shots. The \setonix $\Qloc=28$ GPU weak-scaling series, the \qsim side of both 256-rank weak-scaling sweeps, and the application study use $S=100$; \aer retains $S=1000$ in the 256-rank comparisons, while QuEST's interval excludes sampling. The sampled-output checks compare distributions at their stated shot budget rather than state fidelity.

\subsection*{Numerical configuration}

The \ella and \setonix random-brickwork inputs use generator seed 0, and the per-shot random-number seed is fixed within each series. These choices reproduce the tested inputs and samples; series based on one timed observation do not estimate timing variance.

GPU state-vector rows use complex-FP32 storage for \qsim, \aer with \code{precision='single'}, and FP32 QuEST builds unless a caption states otherwise. \lightninggpu, \lightningkokkos, and Qulacs-GPU use complex-FP64 storage in the cited tables. CPU state-vector rows use complex FP64 except the explicitly marked \qsimg-CPU series. Tableau updates are bitwise and deterministic for a fixed build and input.

On GH200, \code{--gemmPrec tf32} selects TF32 multiplication with FP32 accumulation, while \code{--gemmPrec fp32} selects strict FP32 GEMM. The HIP measurements use FP32 arithmetic. State-vector sampling uses a hierarchical cumulative distribution: single-GPU execution uses $2^{10}$-entry blocks with FP64 block sums and search, and local CPU execution uses an FP64 CDF with $2^{16}$-entry blocks. MPI-GPU stores within-block prefixes in FP32 and block and inter-rank offsets in FP64; MPI-CPU uses a full FP64 local CDF.

\subsection*{Execution configuration}

Single-node CPU comparisons use 70 of 72 Grace cores on \ella and 64 threads on \setonix; each reported tool uses its fastest measured binding policy. The \setonix CPU-MPI sweep assigns one node and 64 OpenMP threads per rank.

Run records include the resolved simulation method, fusion-width cap, 64-operation DP window, cross-window merge, arithmetic mode, and distributed lowering. GPU-MPI searches 1024 entries of the gate/noise target schedule with bounded-lookahead farthest-next-use; CPU-MPI uses first-free placement. Terminal count sampling decodes the final layout map without restoring canonical state-vector order.

\FloatBarrier

\section{Supporting analysis and measurements}\label{appendix:impl}

This appendix collects the complexity summary, implementation bounds, correctness and profile evidence, and per-cell measurements underlying \S\ref{sec:eval}. Times are in seconds. Bold marks the minimum within a row when tools are columns and within a rank column when tools are rows. 

\begin{table*}[!t]
\caption{Asymptotic costs of the algorithms used in \qsim. Here $N$ is the qubit count, $k$ the fused-block width, $|C|$ the parsed instruction count, $\mathcal{W}$ the fusion-window limit, $K_{\max}$ the fusion cap, $K_{\mathrm{blk}}$ the larger of $K_{\max}$ and the width of the widest unitary noise-free input gate, $n_{\mathrm{meas}}$ the measurement count, $P=2^r$ the MPI rank count, and $\Qloc=N-r$ the local index-bit count. ``Time cost'' separates arithmetic from communication or memory traffic; ``working set'' is the dominant in-memory data structure.}
\label{tab:complexity-summary}
\centering\small
\resizebox{\textwidth}{!}{%
\begin{tabular}{l l l l}
\toprule
Algorithm                              & Time cost                                                              & Working set                  & Reference \\
\midrule
Generic permutation--GEMM (per block) & arithmetic $\Theta(2^{N+k})$ complex multiply-adds; permutation and state-update traffic each $\Theta(2^N)$ for fixed $k$                           & $\bigO{2^N+4^k}$                 & Eqs.~\eqref{eq:cost-perm}--\eqref{eq:cost-formal} \\
Diagonal fused block     & $\Theta(2^N)$                                                           & $\bigO{2^N}$                 & Eq.~\eqref{eq:cost-diag-formal} \\
Materialised density superoperator (application) & arithmetic $\bigO{2^{2N+2k}}$; state traffic $\bigO{4^N}$; operator read $\bigO{16^k}$                          & $\bigO{4^N+16^k}$                 & \S\ref{sec:methods-axisshift} \\
DP fusion (whole circuit)              & partition DP $\bigO{|C|\cdot \mathcal{W}\cdot K_{\mathrm{blk}}}$; dense block-matrix construction $\bigO{|C|\cdot 8^{K_{\max}}}$ & circuit matrices $\bigO{|C|\cdot4^{K_{\mathrm{blk}}}}$; DP auxiliary $\bigO{\mathcal{W}+K_{\mathrm{blk}}}$      & Eq.~\eqref{eq:dp-recurrence} \\
Stabiliser Clifford gate (per op)      & $\Theta(\lceil 2N/64\rceil)$ updates to 64-bit integers                                                 & $\bigO{N^2}$ bits                 & Eq.~\eqref{eq:cost-cliff-gate} \\
Stabiliser (general per shot)          & $\bigO{|C|\cdot N+(n_{\mathrm{meas}}+1)\cdot N^2}$ updates to 64-bit integers                                 & $\bigO{N^2}$ bits                 & Eq.~\eqref{eq:cost-stab-shot} \\
Lazy MPI placement (per block) & one promotion moves $\Theta(2^N/P)$ amplitudes per rank; local update $\bigO{2^{k+\Qloc}}$                & $\bigO{2^N/P+4^k+N}$    & Eq.~\eqref{eq:lazy-vol} \\
\bottomrule
\end{tabular}}
\end{table*}

\subsection{Engine bounds and concrete dispatch}\label{appendix:divergences}

The factory selects a concrete engine. State-vector and density-matrix paths instantiate their supported widths as template arguments, making dimensions, index masks, and layouts compile-time constants; MPS, tensor-network, and tableau storage is sized at runtime. Distributed state-vector paths instantiate the local width $\Qloc$ rather than the global qubit count $N$.

Tab.~\ref{tab:engine-matrix} lists the factory's dispatch ranges. Requests outside these ranges are rejected; instances within them remain subject to memory and workspace limits. The density-matrix path reuses the GPU state-vector machinery at width $2N$ under the vectorisation convention of \S\ref{sec:methods-axisshift}.

\begin{table}[!htp]
\caption{Factory dispatch ranges and state representations. The ranges are acceptance bounds, not memory guarantees.}
\label{tab:engine-matrix}
\centering\footnotesize
\begin{tabular}{@{}>{\raggedright\arraybackslash}p{0.19\columnwidth}>{\raggedright\arraybackslash}p{0.22\columnwidth}>{\raggedright\arraybackslash}p{0.49\columnwidth}@{}}
\toprule
Engine or path & Accepted range & State representation \\
\midrule
\code{dense}            & $3\le N\le39$       & CPU state vector with complex FP64 amplitudes \\
\code{gpu}              & $3\le N\le35$       & CUDA or HIP state vector with complex FP32 amplitudes \\
\code{mps}              & $N\le200$            & Runtime-sized MPS with a bond cap and singular-value cutoff \\
\code{cutn}             & $N\sim50$            & Runtime-sized NVIDIA tensor network; absent from the evaluated builds \\
\code{density\_}\allowbreak\code{matrix} & $2\le N\le17$ & Vectorised density matrix \\
tableau (routed)        & not template-bounded    & $\bigO{N^2}$-bit stabiliser tableau \\
MPI GPU                 & $3\le\Qloc\le35$    & Distributed state vector with complex FP32 amplitudes \\
MPI CPU                 & $3\le\Qloc\le39$    & Distributed state vector with complex FP64 amplitudes \\
\bottomrule
\end{tabular}
\end{table}

Each evaluated binary selects a CPU, CUDA, or HIP backend at build time; MPI builds distribute the corresponding local state-vector path. Tab.~\ref{tab:backend-matrix} summarises their native lowerings, Appendix~\ref{appendix:artifact} gives the build recipes and software stacks, and Appendix~\ref{appendix:reproducibility} defines the numerical modes.

On the evaluated CPU path, diagonal blocks stream in place. Dense blocks use direct gather--multiply--scatter kernels for $k\le3$, and for $k=4,5$ when the local state contains at least $2^{23}$ amplitudes; all other dense blocks use host permutation and complex GEMM. CPU-MPI applies the same rule within each rank.

\subsection{Fusion configuration}\label{appendix:fusion-sensitivity}

For state-vector execution, the CUDA GPU path uses $K_{\max}=7$ at $N\ge30$ and $K_{\max}=5$ below that threshold; CPU uses $K_{\max}=5$ and HIP uses $K_{\max}=6$. The marked-diagonal cap is 6 on GPU and 5 on CPU.

In the two instantaneous-quantum-polynomial cells (\code{iqp\{26,30\}d5}) and the \code{rand30x50} ablation, the dynamic-programming partition absorbs diagonal operations into mixed fused blocks; replacing its score with a diagonal-aware variant changes the Hopper intervals by less than $2\%$.

\subsection{32-bit limits in distributed exchange}\label{appendix:int-max}

Two distinct 32-bit limits affect distributed execution. The full-permutation helper stores the destination offset of every local amplitude as a signed 32-bit integer. Because a rank holds $2^{\Qloc}$ amplitudes, this routing table is valid only for $\Qloc\le30$. The restriction belongs to this helper, which is used for $\Qloc<4$ or when unchunked execution is requested; it is not a bound on the distributed engine.

The default dispatcher instead partitions the local state into chunks and realises a permutation through intra-rank swaps and XOR-partner or coalesced cross-rank transfers. It does not construct the full 32-bit destination-index table, and transfers larger than an MPI call can represent are split into pieces. The default chunk width is $B=\Qloc-1$ for $\Qloc\ge4$, so a generic fused block must satisfy $k\le B$. The large-$\Qloc$ weak-scaling runs use this path.

The second limit is the signed 32-bit \code{count} accepted by the evaluated MPI interfaces. \qsim counts typed complex elements rather than bytes and keeps every call below \code{INT\_MAX}; for example, $2^{28}$ FP64 complex amplitudes have a 4~GiB payload but a representable element count. \aer counts bytes for its swap messages, so the evaluated FP32 GPU baselines use \code{blocking\_qubits=27} and the FP64 CPU-MPI baseline uses 26 to keep each call below 2~GiB. On the \code{rand30x50} GPU strong-scaling axis, the FP32 limit is reached at $P=2$ and $P=4$ on both platforms, so those \aer cells use the blocked exchange.

\subsection{Proof of the distributed layout invariant}\label{appendix:lazy-proof}

We expand Theorem~\ref{thm:lazy-mpi-invariant}. Let $\pi_t$ map each logical qubit to its current physical index-bit position, and let $\Pi_{\pi_t}$ be the corresponding basis permutation from \S\ref{sec:methods-mpi}. If $\ket{\psi_t}$ is the canonical logical state and $\ket{\widetilde{\psi}_t}$ the stored state, the invariant is
\begin{equation*}
  \ket{\widetilde{\psi}_t}
    = \Pi_{\pi_t}\ket{\psi_t}.
\end{equation*}
Equivalently, for every logical bit string $x$, the amplitude at the physical bit string $y$ satisfying $y_{\pi_t(q)}=x_q$ equals the logical amplitude at $x$.

\begin{lemma}[Local mapped operator]
For a fused operator $G$ on ordered logical targets $V=(q_0,\ldots,q_{k-1})$, set $A=\pi_t(V)$. If all positions in $A$ are local, an order-preserving local lowering preserves the invariant with $\pi_{t+1}=\pi_t$. If the lowering retains a permutation $\alpha_A$ of the local bit positions, the invariant is preserved with $\pi_{t+1}=\hat{\alpha}_A\circ\pi_t$, where $\hat{\alpha}_A$ leaves the rank-address positions fixed.
\end{lemma}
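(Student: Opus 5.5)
The argument works at the level of computational-basis amplitudes and uses the invariant in its component form. Assume $\ket{\widetilde{\psi}_t}=\Pi_{\pi_t}\ket{\psi_t}$. For every logical string $x$, the stored amplitude at the physical string $y$ with $y_{\pi_t(q)}=x_q$ then equals $\psi_{t,x}$. The logical update is $\ket{\psi_{t+1}}=G_V\ket{\psi_t}$. Here $G_V$ acts on target bits $(x_{q_0},\ldots,x_{q_{k-1}})$, read little-endian as a row index, and leaves the remaining bits as spectators. The plan is to verify, amplitude by amplitude, that the physical update produces $\Pi_{\pi_{t+1}}G_V\ket{\psi_t}$.

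\emph{Order-preserving case.} The lowering applies $G$ on physical positions $A=(\pi_t(q_0),\ldots,\pi_t(q_{k-1}))$, with position $\pi_t(q_j)$ supplying bit $j$. Fix the non-target logical bits; these are exactly the physical bits outside $A$, and this set includes all rank-address bits, because every position of $A$ is local. Each such assignment picks out one group of $2^k$ amplitudes, and all of them lie on the same rank. Under the correspondence $y_{\pi_t(q_j)}=x_{q_j}$, this group is the same vector of $2^k$ entries as the logical group, indexed by $\sum_j x_{q_j}2^j$. The kernel multiplies each physical group by $G$, and the logical operator multiplies the matching logical group by $G$. So the stored and logical amplitudes agree again with $\pi_{t+1}=\pi_t$. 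Equivalently, this is the operator identity $\Pi_{\pi_t}G_V\Pi_{\pi_t}^{-1}=G_A$, and that identity is the main bookkeeping to write down carefully.

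\emph{Retained-permutation case.} Write the lowering as $\Pi_{\alpha_A}$ followed by $G$ on the row-block positions $(N_{\mathrm{loc}}-k,\ldots)$, where $\alpha_A$ sends $\pi_t(q_j)$ to row-index bit $j$. In the distributed setting the row block sits at the top local bits $\Qloc-k+j$, and the step is computed rank-locally. First I extend $\alpha_A$ by the identity on rank-address bits to get $\hat\alpha_A$. A short check on basis states gives $\Pi_{\hat\alpha_A}\Pi_{\pi_t}=\Pi_{\hat\alpha_A\circ\pi_t}$: logical bit $q$ goes to $\pi_t(q)$ and then to $\hat\alpha_A(\pi_t(q))$. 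So after the permutation the stored state is $\Pi_{\pi'}\ket{\psi_t}$ with $\pi'=\hat\alpha_A\circ\pi_t$. Under $\pi'$, the targets $q_j$ now sit at the row positions in the correct order. The order-preserving case then applies with map $\pi'$, and the invariant holds with $\pi_{t+1}=\hat\alpha_A\circ\pi_t$.

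I expect the main obstacle to be the composition-direction convention, namely showing $\Pi_{\hat\alpha}\Pi_{\pi}=\Pi_{\hat\alpha\circ\pi}$ rather than the reverse, together with confirming that a rank-local permutation really is the global operator $\Pi_{\hat\alpha_A}$. I would settle the second point directly from Eq.~\eqref{eq:mpi-block}. Since $\hat\alpha_A$ fixes the high bits, each rank's block $p$ is mapped into itself, so applying the local permutation on every rank is the same as applying $\Pi_{\hat\alpha_A}$ globally.
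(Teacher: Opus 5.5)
Your proof is correct and follows the paper's argument: the order-preserving case is the intertwining identity $\Pi_{\pi_t}G_V\Pi_{\pi_t}^{-1}=G_{\pi_t(V)}$, and the retained case is the composition rule $\Pi_{\hat\alpha_A}\Pi_{\pi_t}=\Pi_{\hat\alpha_A\circ\pi_t}$. The differences are only presentational. You apply the permutation before $G$, as Algorithm~\ref{alg:axis-shift} executes it, and reduce to the first case with the intermediate map $\hat\alpha_A\circ\pi_t$, whereas the paper writes the permutation after $G_V$, which is equivalent by the same identity; your amplitude-level and rank-locality checks spell out what the paper leaves implicit.
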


\begin{proof}
By definition of $\Pi_{\pi_t}$, selecting the target bits at physical positions $A=\pi_t(V)$ selects the same ordered subspace index as selecting $V$ in the logical state. Let $G_W$ denote $G$ acting on the ordered targets $W$. The order-preserving case follows from
\[
G_{\pi_t(V)}\Pi_{\pi_t}=\Pi_{\pi_t}G_V.
\]
If the lowering retains $\alpha_A$, the additional basis permutation gives
\[
\Pi_{\hat{\alpha}_A}\Pi_{\pi_t}G_V\ket{\psi_t}
=\Pi_{\hat{\alpha}_A\circ\pi_t}G_V\ket{\psi_t},
\]
which proves the second case.
\end{proof}

\begin{lemma}[Distributed promotion swap]
Suppose a target lies at rank-address position $b\ge\Qloc$ and $\ell<\Qloc$ is an unpinned local position. Let $\tau$ exchange physical positions $b$ and $\ell$. A distributed bit transpose implementing $\Pi_\tau$, followed by the map update $\pi_{t+1}=\tau\circ\pi_t$, preserves the invariant and makes that target local.
\end{lemma}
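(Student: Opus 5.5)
The plan is to prove the lemma directly from the definition of $\Pi_{\pi}$ on computational-basis states, in two parts: an operator identity for composing basis permutations, and then the update of the map.

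First, I will establish the composition rule. For any bijection $\pi$ of bit positions and any transposition $\tau$ of physical positions, I will show $\Pi_\tau\Pi_{\pi}=\Pi_{\tau\circ\pi}$. Take a logical basis state $\ket{x}$. By definition, $\Pi_\pi\ket{x}=\ket{y}$ with $y_{\pi(q)}=x_q$. The operator $\Pi_\tau$ is the basis permutation that moves the bit at physical position $p$ to position $\tau(p)$. It therefore sends $\ket{y}$ to $\ket{y'}$ with $y'_{\tau(p)}=y_p$. Setting $p=\pi(q)$ gives $y'_{\tau(\pi(q))}=x_q$, which is exactly the defining property of $\Pi_{\tau\circ\pi}\ket{x}$. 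Both sides are linear and agree on a basis, so the identity holds. This is the same computation used in the second case of the preceding lemma, with $\hat{\alpha}_A$ replaced by $\tau$.

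Second, I will identify the physical effect of the distributed bit transpose. Under the block distribution of Eq.~\eqref{eq:mpi-block}, the global flat index of the stored state is $p\cdot2^{\Qloc}+j$. Exchanging rank-address bit $b$ with local bit $\ell$ therefore moves the amplitude at global index $y$ to the index whose bits $b$ and $\ell$ are interchanged. Rank $p$ sends to its XOR-partner the local amplitudes whose bit $\ell$ differs from bit $b$ of $p$. This is precisely $\ket{\widetilde{\psi}_{t+1}}=\Pi_\tau\ket{\widetilde{\psi}_t}$. The logical state is untouched, so $\ket{\psi_{t+1}}=\ket{\psi_t}$. Combining with the inductive hypothesis gives
\[
\ket{\widetilde{\psi}_{t+1}}=\Pi_\tau\Pi_{\pi_t}\ket{\psi_t}=\Pi_{\tau\circ\pi_t}\ket{\psi_{t+1}}=\Pi_{\pi_{t+1}}\ket{\psi_{t+1}}.
\]
The invariant is preserved.

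Third, I will check that the target becomes local and that no pinned target moves. The target qubit $q$ had $\pi_t(q)=b$, so $\pi_{t+1}(q)=\tau(b)=\ell<\Qloc$. Every other qubit keeps its position unless it sat at $\ell$. The qubit previously at $\ell$ is not a pinned target, by the choice of $\ell$, and it moves to $b$. So the positions in $\mathit{Pin}$ are unchanged, and $\tau\circ\pi_t$ is again a bijection.

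I expect the main obstacle to be the second step. It requires matching the MPI-level description, a pairwise exchange of half of each rank's amplitudes with partner $p\oplus2^{b-\Qloc}$, to the abstract operator $\Pi_\tau$ without any off-by-one in the direction of the permutation. I would handle this by writing both sides componentwise on an arbitrary global index and confirming that they coincide. I would also note that the chunked and coalesced lowerings realize the same global basis permutation, so the proof does not depend on the transport choice.
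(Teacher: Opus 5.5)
Your proof is correct and follows the paper's argument. The paper's proof is the same computation $\Pi_\tau\ket{\widetilde{\psi}_t}=\Pi_\tau\Pi_{\pi_t}\ket{\psi_t}=\Pi_{\tau\circ\pi_t}\ket{\psi_t}$ with the logical state unchanged, plus the observation that $\tau(b)=\ell$ is local; your componentwise check of the composition rule and of the pinned positions fills in details the paper leaves implicit. Your second step is not the obstacle you expect, because the lemma already assumes a distributed bit transpose \emph{implementing} $\Pi_\tau$, and $\tau$ is an involution, so the direction question does not arise. The XOR-partner description is therefore a sanity check of that hypothesis rather than part of the proof.
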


\begin{proof}
The transpose changes the stored state to
\[
\Pi_\tau\ket{\widetilde{\psi}_t}
=\Pi_\tau\Pi_{\pi_t}\ket{\psi_t}
=\Pi_{\tau\circ\pi_t}\ket{\psi_t}.
\]
The logical state is unchanged during promotion, and the updated map records exactly the physical transposition. Because $b$ is exchanged with a local position, the promoted target becomes local. The condition $k\le\Qloc$ guarantees that enough unpinned local positions exist for all targets of the fused operator.
\end{proof}

These lemmas prove the theorem by induction over promotion swaps and fused operators. A terminal computational-basis sampler need not invert the permutation: it samples a physical basis index and decodes each logical output bit through $\pi_t$. A consumer that requires canonical amplitude order may instead request materialisation of the inverse permutation.

\subsection{Precision comparison}

The first two columns of Tab.~\ref{tab:precision-bridge} hold the \aer-noCuQ backend fixed and change its configured precision from FP64 to FP32. The FP32 configuration reduces its interval by $1.46$--$1.86\times$ across the sweep. It is therefore the relevant matched-storage \aer-noCuQ baseline for \qsim: \aer is faster at $N=24$ and 26, and \qsim is faster from $N=28$.

\begin{table}[!htp]
\caption{Precision comparison on \ella; entries are five-run medians in seconds. \aer denotes \aer-noCuQ, and the ratio uses its FP32 column. \qsim stores FP32 states and uses TF32 multiplication with FP32 accumulation.}
\label{tab:precision-bridge}
\centering\footnotesize
\setlength{\tabcolsep}{3pt}
\begin{tabular}{@{}lrrrr@{}}
\toprule
circuit & \aer FP64 & \aer FP32 & \qsim FP32 & ratio \\
\midrule
\code{rand24x50} & 0.127 & 0.087 & 0.256 & 0.34$\times$ \\
\code{rand26x50} & 0.413 & 0.238 & 0.410 & 0.58$\times$ \\
\code{rand28x50} & 1.701 & 0.949 & 0.901 & 1.05$\times$ \\
\code{rand30x50} & 7.124 & 3.821 & 1.932 & 1.98$\times$ \\
\code{rand32x50} & 30.13 & 16.27 & 7.415 & 2.19$\times$ \\
\bottomrule
\end{tabular}
\end{table}

\subsection{Capacity and execution profiles}\label{appendix:resource-profile}

\paragraph{GH200} A complex FP32 state vector occupies $8\cdot2^N$ bytes, and the permutation--GEMM path holds both input and output vectors. Its two-vector baseline is therefore 16~GiB at $N=30$ and 64~GiB at $N=32$. The measured footprints are 22.5 and 70.5~GiB, respectively; the additional 6.5~GiB is library workspace. The $N=32$ case remains within the GH200's 96~GiB device memory. An external device-kernel trace of \code{rand32x20} attributes 1.95~s to cuTensor permutation and 1.81~s to cuBLAS GEMM, with negligible sampling time, showing that the two full-state terms have comparable cost in this run.

\paragraph{MI250X} The same $N=32$ two-vector baseline fills the 64~GiB available to one GCD before workspace is added, so \code{rand32x50} runs through managed-memory paging. The matched-FP32 QuEST-GPU build uses one in-place state vector and completes in 148~s, compared with 626.6~s for \qsim. The available host profile does not support a kernel-level split: \code{fastTranspose} accounts for $91\%$ of the interval because it synchronises the stream and therefore absorbs preceding asynchronous work and paging, while \code{fastMatmulGPU} records only $0.010$~s. At $N\ge32$, \qsim uses a custom dense-update kernel with 64-bit offsets because the tested rocBLAS \code{cgemm} interface uses 32-bit indices.

\subsection{Per-platform measurement tables}

The tables below provide the CPU, MPI, and large-scale weak-scaling values behind the evaluation figures. Fig.~\ref{fig:qft} combines the QFT rows from the four single-device tables.

\begin{figure*}[t]
\centering
\includegraphics[width=\textwidth]{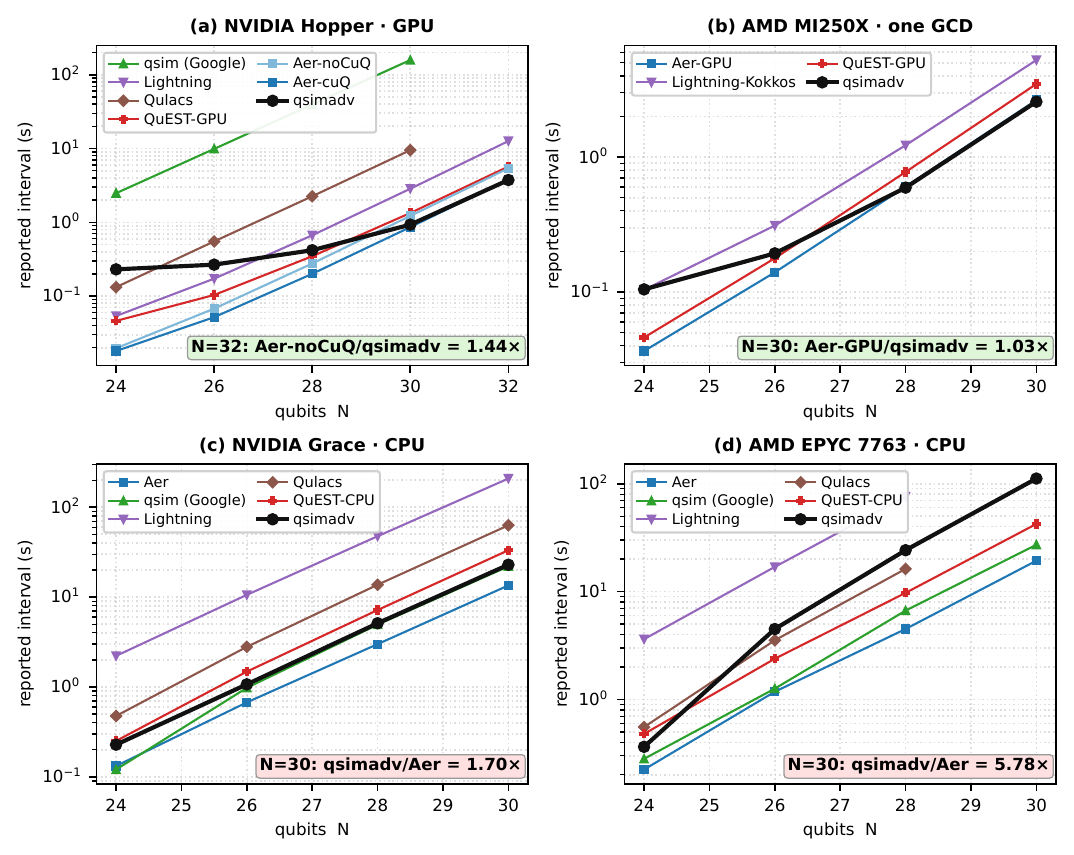}
\caption{Single-device QFT intervals on (a) \ella Hopper, (b) one \setonix MI250X GCD, (c) \ella Grace, and (d) one \setonix EPYC node. \lightninggpu, Qulacs-GPU, and \lightningkokkos use FP64; \qsimg-CPU uses FP32. Values: Tabs.~\ref{tab:3way-gpu}, \ref{tab:setonix-engines}, \ref{tab:3way-cpu}, and \ref{tab:setonix-cpu-3way}.}
\label{fig:qft}
\end{figure*}

\subsubsection{\ella}

The following tables give the \ella CPU and MPI data behind Fig.~\ref{fig:ella}; the single-GPU cells are in Tab.~\ref{tab:3way-gpu}.

\begin{table}[!htp]
\caption{Single-process CPU intervals on \ella. \qsimg-CPU uses FP32; the other engines use FP64. The panels share rows; ``T/O'' denotes timeout.}
\label{tab:3way-cpu}
\centering\footnotesize
\begin{tabular}{@{}lrrr@{}}
\toprule
Circuit       & \qsim & \aer & \qsimg-CPU \\
\midrule
\code{rand24x50}  & 1.139 & 1.044 & \good{0.531} \\
\code{rand26x50}  & 7.062 & 4.799 & \good{2.531} \\
\code{rand28x50}  & 20.85 & 22.64 & \good{11.85} \\
\code{rand30x50}  & 88.31 & 98.33 & \good{50.64} \\
\code{qft24}      & 0.228 & 0.132 & \good{0.121} \\
\code{qft26}      & 1.070 & \good{0.674} & 0.978 \\
\code{qft28}      & 5.113 & \good{2.996} & 4.938 \\
\code{qft30}      & 22.90 & \good{13.47} & 21.96 \\
\bottomrule
\end{tabular}
\vspace{3pt}

\begin{tabular}{@{}lrrr@{}}
\toprule
Circuit       & \lightning & Qulacs & QuEST-CPU \\
\midrule
\code{rand24x50}  & 80.24 & 3.133 & 3.286 \\
\code{rand26x50}  & 240.8 & 12.43 & 11.97 \\
\code{rand28x50}  & T/O & 80.43 & 78.42 \\
\code{rand30x50}  & T/O & 337.2 & 332.1 \\
\code{qft24}      & 2.217 & 0.476 & 0.251 \\
\code{qft26}      & 10.59 & 2.795 & 1.482 \\
\code{qft28}      & 47.55 & 13.72 & 7.177 \\
\code{qft30}      & 208.4 & 62.91 & 33.20 \\
\bottomrule
\end{tabular}
\end{table}

The MPI tables use HPC-X OpenMPI over RoCEv2. The GPU sweep assigns one GH200 to each rank and uses complex-FP32 storage; because \code{rand30x50} fits on one GH200, additional ranks expose distributed-execution cost rather than capacity. The \aer rows at $P\ge2$ use \code{blocking\_qubits=27} to keep swap messages below the count-width limit of Appendix~\ref{appendix:int-max}. The CPU sweep assigns one Grace node to each rank and uses FP64.

\begin{table}[!htp]
\caption{GPU-MPI strong scaling on \code{rand30x50} over \ella. \qsim cells are five-run medians; each \aer cell is one warm observation. ``T/O'' denotes timeout.}
\label{tab:mpi-gpu}
\centering\footnotesize
\begin{tabular}{@{}r r r r@{}}
\toprule
$P$ & \qsim & \aer-noCuQ & \aer-cuQ$^{\dagger}$ \\
\midrule
1 & \good{1.97} & 3.94 & 3.67 \\
2 & \good{4.28} & 4.97 & T/O \\
4 & 5.37        & \good{3.83} & 4.25 \\
8 & 4.32        & \good{4.11} & 4.66 \\
\bottomrule
\end{tabular}
\\[2pt]{\footnotesize $^{\dagger}$ closed-source \cuq.}
\end{table}

The small-rank GPU weak-scaling series in Fig.~\ref{fig:mpi-scaling} holds $\Qloc=31$ and reaches $N=34$. The \aer CPU-MPI build uses its chunked \code{blocking\_enable} configuration, so its $P=1$ row is distinct from the single-process configuration in Tab.~\ref{tab:3way-cpu}.

\begin{table}[!htp]
\caption{CPU-MPI state-vector strong scaling on \code{rand30x50} over \ella. ``T/O'' denotes timeout.}
\label{tab:mpi-cpu}
\centering\small
\begin{tabular}{@{}r r r r@{}}
\toprule
$P$ & \qsim-CPU & \aer-CPU & QuEST-CPU \\
\midrule
1 & \good{87.79}  & T/O    & 330.70 \\
2 & \good{191.68} & 2001.9 & 208.65 \\
4 & 162.00        & 975.3  & \good{133.32} \\
8 & 151.28        & 503.9  & \good{88.15} \\
\bottomrule
\end{tabular}
\end{table}

\subsubsection{\setonix}

The following tables give the \setonix CPU, strong-scaling, and capacity results behind Fig.~\ref{fig:setonix} and Fig.~\ref{fig:weak256}; the single-GCD cells are in Tab.~\ref{tab:setonix-engines}.

\begin{table}[!htp]
\caption{Single-process CPU intervals on \setonix. \qsimg-CPU uses FP32; the other engines use FP64. The panels share rows; ``T/O'' denotes timeout and ``N/M'' an unmeasured cell.}
\label{tab:setonix-cpu-3way}
\centering\footnotesize
\begin{tabular}{@{}lrrr@{}}
\toprule
Circuit       & \qsim & \aer & \qsimg-CPU \\
\midrule
\code{rand24x50}  & 2.150 & 3.071 & \good{1.164} \\
\code{rand26x50}  & 18.21 & 8.986 & \good{2.361} \\
\code{rand28x50}  & 56.98 & 40.37 & \good{19.68} \\
\code{rand30x50}  & 243.6 & 176.4 & \good{55.65} \\
\code{qft24}      & 0.364 & \good{0.224} & 0.282 \\
\code{qft26}      & 4.494 & \good{1.179} & 1.260 \\
\code{qft28}      & 24.17 & \good{4.498} & 6.658 \\
\code{qft30}      & 111.6 & \good{19.31} & 27.16 \\
\bottomrule
\end{tabular}
\vspace{3pt}

\begin{tabular}{@{}lrrr@{}}
\toprule
Circuit       & \lightning & Qulacs & QuEST-CPU \\
\midrule
\code{rand24x50}  & 71.27 & 3.878 & 3.275 \\
\code{rand26x50}  & T/O & 16.73 & 27.28 \\
\code{rand28x50}  & T/O & 92.36 & 102.1 \\
\code{rand30x50}  & T/O & T/O & 405.6 \\
\code{qft24}      & 3.620 & 0.555 & 0.479 \\
\code{qft26}      & 16.94 & 3.539 & 2.385 \\
\code{qft28}      & 76.24 & 16.22 & 9.727 \\
\code{qft30}      & N/M & N/M & 42.34 \\
\bottomrule
\end{tabular}
\end{table}

The MPI comparisons use Cray MPICH over Slingshot/CXI. GPU runs use one rank per node and one GCD per rank with closest-GPU binding; the two- and four-rank \aer cells use \code{blocking\_qubits=27} to stay below the count-width limit of Appendix~\ref{appendix:int-max}.

\begin{table}[!htp]
\caption{GPU-MPI strong scaling on \code{rand30x50} over \setonix. $^\ddagger$ marks \aer cells with reduced \code{blocking\_qubits} to keep messages representable (Appendix~\ref{appendix:int-max}).}
\label{tab:setonix-hip-mpi-detail}
\centering\footnotesize
\resizebox{\columnwidth}{!}{%
\begin{tabular}{lrrrr}
\toprule
Tool & $P=1$ [s] & $P=2$ [s] & $P=4$ [s] & $P=8$ [s] \\
\midrule
\qsim & \good{8.63} & \good{7.08} & \good{4.14} & \good{2.55} \\
\aer-GPU   & 18.35 & 12.68$^\ddagger$ & 7.48$^\ddagger$ & 4.65 \\
QuEST-GPU & 34.47 & 32.16 & 25.80 & 17.44 \\
\bottomrule
\end{tabular}
}
\end{table}

The CPU-MPI sweep assigns one EPYC node and 64 OpenMP threads to each rank; all three tools use the same Cray MPICH stack. The \aer CPU-MPI rows use chunked blocking with \code{blocking\_qubits=26}, whereas the GPU-MPI rows above use 27.

\begin{table}[!htp]
\caption{CPU-MPI state-vector strong scaling on \code{rand30x50} over \setonix.}
\label{tab:setonix-mpi-cpu}
\centering\small
\begin{tabular}{@{}r r r r@{}}
\toprule
$P$ & \qsim-CPU & \aer-CPU & QuEST-CPU \\
\midrule
1 & \good{245.46} & 3044.9 & 451.43 \\
2 & \good{232.04} & 1807.1 & 271.45 \\
4 & \good{143.57} & 872.2  & 167.97 \\
8 & \good{86.83}  & 419.7  & 104.64 \\
\bottomrule
\end{tabular}
\end{table}

For large-scale weak scaling, $N$ increases with $P$ while the rank-local state size remains fixed. The GPU sweep uses $\Qloc=30$ (one MI250X GCD per rank, complex-FP32 storage), and the CPU sweep uses $\Qloc=28$ (one EPYC node per rank, FP64).

\begin{table}[!htp]
\caption{\setonix GPU-MPI weak scaling on depth-20 random brickwork at $\Qloc=30$.}
\label{tab:setonix-gpu-weak}
\centering\small
\begin{tabular}{rrrrr}
\toprule
$P$ & $N$ & nodes & \qsim [s] & \aer-GPU [s] \\
\midrule
8   & 33 & 1  & \good{18.93} & 37.54 \\
16  & 34 & 2  & \good{21.09} & 33.70 \\
32  & 35 & 4  & \good{20.85} & 22.75 \\
64  & 36 & 8  & 26.15  & \good{20.41} \\
128 & 37 & 16 & 22.82  & \good{17.46} \\
256 & 38 & 32 & 31.66  & \good{28.07} \\
\bottomrule
\end{tabular}
\end{table}

\begin{table}[!htp]
\caption{\setonix CPU-MPI weak scaling on depth-20 random brickwork at $\Qloc=28$. ``N/M'' denotes an unmeasured cell.}
\label{tab:setonix-cpu-weak}
\centering\small
\resizebox{\columnwidth}{!}{%
\begin{tabular}{rrrrrr}
\toprule
$P$ & $N$ & total state & \qsim [s] & \aer-CPU [s] & QuEST [s] \\
\midrule
2   & 29 & 8~GiB   &  39.30 & 332.45 &  \good{38.63} \\
4   & 30 & 16~GiB  &  \good{42.32} & 303.83 &  49.21 \\
8   & 31 & 32~GiB  &  \good{45.30} & 335.26 &  60.11 \\
16  & 32 & 64~GiB  &  \good{50.58} & 329.91 &  71.23 \\
32  & 33 & 128~GiB &  \good{64.77} & 363.28 &  82.68 \\
64  & 34 & 256~GiB &  \good{81.64} & 415.89 &  95.06 \\
128 & 35 & 512~GiB &  \good{84.97} & 407.68 & 107.05 \\
256 & 36 & 1~TiB   & \good{112.44} & 381.91 & N/M \\
\bottomrule
\end{tabular}}
\end{table}

\FloatBarrier
\subsection{Semantic routing and application workloads}\label{appendix:qec-detail}

The routing benchmark contains eight GHZ circuits with per-gate depolarising noise at $p_{\mathrm{dep}}=10^{-2}$ and two noiseless random-Clifford circuits. Each \qsim input is run once with ordinary semantic routing and once with routing disabled to force state-vector execution where it fits. Every ordinary request selects the tableau; the forced state-vector series ends at $N=30$. \aer's stabiliser sampler and Stim provide dedicated stabiliser baselines in Tab.~\ref{tab:qec-stab}.

For each GHZ cell, let $\hat p_i$ be tool $i$'s observed frequency of the all-zero outcome $0^N$ at $S=1000$. For tools $i,j$, the independent binomial standard error is $\mathrm{se}_{ij}=\sqrt{\hat p_i(1-\hat p_i)/S+\hat p_j(1-\hat p_j)/S}$. The largest defined ratio $|\hat p_i-\hat p_j|/\mathrm{se}_{ij}$ is 1.02. This check covers one sampled GHZ marginal; it does not cover the random-Clifford outputs.

\begin{table}[!htp]
\caption{Synthetic Clifford routing on \ella. \qsim~tab and \qsim~SV denote routed-tableau and forced-state-vector modes; ``N/A'' denotes an unavailable state-vector cell.}
\label{tab:qec-stab}
\centering\footnotesize
\resizebox{\columnwidth}{!}{%
\begin{tabular}{@{}lrrrrr@{}}
\toprule
Circuit & $N$ & \qsim tab & \aer stab & Stim & \qsim SV \\
\midrule
\code{ghz20}        & 20   & 0.0061 & 0.0020 & \good{0.00005} & 1.284 \\
\code{ghz25}        & 25   & 0.0056 & 0.0028 & \good{0.00006} & 12.45 \\
\code{ghz30}        & 30   & 0.0058 & 0.0026 & \good{0.00007} & 454.8 \\
\code{ghz60}        & 60   & 0.0073 & 0.0045 & \good{0.00011} & N/A \\
\code{ghz100}       & 100  & 0.0097 & 0.0082 & \good{0.00020} & N/A \\
\code{ghz200}       & 200  & 0.0198 & 0.0243 & \good{0.00037} & N/A \\
\code{ghz500}       & 500  & 0.129  & 0.174  & \good{0.00090} & N/A \\
\code{ghz1000}      & 1000 & 0.884  & 1.195  & \good{0.00164} & N/A \\
\code{randcliff100} & 100  & 0.113  & 0.204  & \good{0.00041} & N/A \\
\code{randcliff500} & 500  & 4.715  & 12.81  & \good{0.00585} & N/A \\
\bottomrule
\end{tabular}}
\end{table}

The application set comprises QAOA size and depth sweeps, one-layer UCCSD-style H$_2$, LiH, and BeH$_2$ circuits, and four QASMBench inputs, including the Clifford \code{qec9xz\_n17} case. Each shared \openqasm input supported by all three tools is run with \qsim, \aer, and \lightning on one Hopper GPU. Tab.~\ref{tab:app-workloads} reports five-trial medians under the runner boundaries of Appendix~\ref{appendix:reproducibility}.

\begin{table}[!htp]
\centering\footnotesize
\caption{Five-trial application-workload medians on \ella under the tool-native timing policy of Appendix~\ref{appendix:reproducibility}. $N$ is the number of acted-on qubits.}
\label{tab:app-workloads}
\setlength{\tabcolsep}{3pt}
\begin{tabular}{@{}l c r r r@{}}
\toprule
Circuit & $N$ & \qsim & \aer & \lightning \\
\midrule
\multicolumn{5}{l}{\emph{QAOA, 2 layers: size sweep}} \\
\code{qaoa18p2\_ring} & 18 & 0.572 & 0.375 & \good{0.127} \\
\code{qaoa22p2\_ring} & 22 & 0.574 & 0.385 & \good{0.143} \\
\code{qaoa26p2\_ring} & 26 & 0.625 & 0.426 & \good{0.233} \\
\code{qaoa18p2\_3reg} & 18 & 0.583 & 0.333 & \good{0.130} \\
\code{qaoa22p2\_3reg} & 22 & 0.585 & 0.395 & \good{0.147} \\
\code{qaoa26p2\_3reg} & 26 & 0.616 & 0.424 & \good{0.265} \\
\multicolumn{5}{l}{\emph{QAOA at $N=30$: one to three layers}} \\
\code{qaoa30p1\_ring} & 30 & 0.933 & \good{0.698} & 1.094 \\
\code{qaoa30p2\_ring} & 30 & 1.013 & \good{0.988} & 1.972 \\
\code{qaoa30p3\_ring} & 30 & \good{1.085} & 1.205 & 2.849 \\
\code{qaoa30p1\_3reg} & 30 & 0.978 & \good{0.758} & 1.365 \\
\code{qaoa30p2\_3reg} & 30 & \good{1.070} & 1.107 & 2.511 \\
\code{qaoa30p3\_3reg} & 30 & \good{1.233} & 1.424 & 3.667 \\
\multicolumn{5}{l}{\emph{UCCSD-style, 1 layer}} \\
\code{uccsd\_h2} & 8 & 0.603 & 0.401 & \good{0.190} \\
\code{uccsd\_lih} & 12 & 0.756 & \good{0.658} & 1.137 \\
\code{uccsd\_beh2} & 14 & 1.279 & \good{1.048} & 2.875 \\
\multicolumn{5}{l}{\emph{QASMBench medium}} \\
\code{multiply\_n13} & 13 & 0.586 & 0.322 & \good{0.127} \\
\code{multiplier\_n15} & 15 & 0.580 & 0.327 & \good{0.130} \\
\code{dnn\_n16} & 16 & 0.591 & 0.412 & \good{0.306} \\
\code{qec9xz\_n17} & 17 & \good{0.071} & 0.430 & 0.125 \\
\bottomrule
\end{tabular}
\end{table}

\FloatBarrier
\section*{Acknowledgment}
This work was supported by resources provided by the Pawsey Supercomputing Research Centre with funding from the Australian Government and the Government of Western Australia. It was carried out within the Pawsey Supercomputing Research Centre’s Quantum Supercomputing Innovation Hub, made possible by a grant from the Australian Government through the National Collaborative Research Infrastructure Strategy (NCRIS). Computational resources were provided by the Pawsey Supercomputing Research Centre’s Setonix Supercomputer (\href{https://doi.org/10.48569/18sb-8s43}{https://doi.org/10.48569/18sb-8s43}), with funding from the Australian Government and the Government of Western Australia.

AI-assisted tools were used during the development of portions of the software implementation. All generated code and related documents were reviewed, validated, and modified by the authors.

\bibliographystyle{IEEEtran}
\IEEEtriggercmd{\newpage}
\IEEEtriggeratref{36}
\bibliography{references}

\begin{thebibliography}{10}
\providecommand{\url}[1]{#1}
\csname url@samestyle\endcsname
\providecommand{\newblock}{\relax}
\providecommand{\bibinfo}[2]{#2}
\providecommand{\BIBentrySTDinterwordspacing}{\spaceskip=0pt\relax}
\providecommand{\BIBentryALTinterwordstretchfactor}{4}
\providecommand{\BIBentryALTinterwordspacing}{\spaceskip=\fontdimen2\font plus
\BIBentryALTinterwordstretchfactor\fontdimen3\font minus
  \fontdimen4\font\relax}
\providecommand{\BIBforeignlanguage}[2]{{%
\expandafter\ifx\csname l@#1\endcsname\relax
\typeout{** WARNING: IEEEtran.bst: No hyphenation pattern has been}%
\typeout{** loaded for the language `#1'. Using the pattern for}%
\typeout{** the default language instead.}%
\else
\language=\csname l@#1\endcsname
\fi
#2}}
\providecommand{\BIBdecl}{\relax}
\BIBdecl

\bibitem{bayraktar2023cuquantum}
H.~Bayraktar, A.~Charara, D.~Clark, S.~Cohen, T.~Costa, Y.-L.~L. Fang, Y.~Gao,
  J.~Guan, J.~Gunnels, A.~Haidar, A.~Hehn, M.~Hohnerbach, M.~Jones, T.~Lubowe,
  D.~Lyakh, S.~Morino, P.~Springer, S.~Stanwyck, I.~Terentyev, S.~Varadhan,
  J.~Wong, and T.~Yamaguchi, ``{cuQuantum SDK}: a high-performance library for
  accelerating quantum science,'' in \emph{2023 IEEE International Conference
  on Quantum Computing and Engineering (QCE)}, 2023, pp. 1050--1061.

\bibitem{aleksandrowicz2019qiskit}
\BIBentryALTinterwordspacing
G.~Aleksandrowicz \emph{et~al.}, ``Qiskit: An open-source framework for quantum
  computing,'' Zenodo software record, 2019. [Online]. Available:
  \url{https://doi.org/10.5281/zenodo.2562111}
\BIBentrySTDinterwordspacing

\bibitem{wood2024aerprivacy}
\BIBentryALTinterwordspacing
{Qiskit Development Team}, ``{AerSimulator}: simulation method options,'' 2025,
  qiskit Aer documentation, accessed: Jul. 31, 2026. [Online]. Available:
  \url{https://qiskit.github.io/qiskit-aer/stubs/qiskit_aer.AerSimulator.html}
\BIBentrySTDinterwordspacing

\bibitem{asadi2024lightning}
\BIBentryALTinterwordspacing
A.~Asadi, A.~Dusko, C.-Y. Park, V.~Michaud-Rioux, I.~Schoch, S.~Shu,
  T.~Vincent, and L.~J. O'Riordan, ``Hybrid quantum programming with {PennyLane
  Lightning} on {HPC} platforms,'' arXiv preprint arXiv:2403.02512, 2024.
  [Online]. Available: \url{https://arxiv.org/abs/2403.02512}
\BIBentrySTDinterwordspacing

\bibitem{isakov2021qsim}
\BIBentryALTinterwordspacing
S.~V. Isakov, D.~Kafri, O.~Martin, C.~{Vollgraff Heidweiller}, W.~Mruczkiewicz,
  M.~P. Harrigan, N.~C. Rubin, R.~Thomson, M.~Broughton, K.~Kissell, E.~Peters,
  E.~Gustafson, A.~C.~Y. Li, H.~Lamm, G.~Perdue, A.~K. Ho, D.~Strain, and
  S.~Boixo, ``Simulations of quantum circuits with approximate noise using
  {qsim} and {Cirq},'' arXiv preprint arXiv:2111.02396, 2021. [Online].
  Available: \url{https://arxiv.org/abs/2111.02396}
\BIBentrySTDinterwordspacing

\bibitem{brown2026multigpu}
W.~M. Brown, A.~Ramesh, T.~Lubinski, T.~Nguyen, and D.~E. Bernal~Neira,
  ``Multi-gpu quantum circuit simulation and the impact of network
  performance,'' \emph{Computer Physics Communications}, vol. 324, p. 110126,
  2026.

\bibitem{setonix}
\BIBentryALTinterwordspacing
{Pawsey Supercomputing Research Centre}, ``{Setonix},'' n.d., accessed: Jul.
  14, 2026. [Online]. Available: \url{https://pawsey.org.au/systems/setonix/}
\BIBentrySTDinterwordspacing

\bibitem{rocm721release}
\BIBentryALTinterwordspacing
{Advanced Micro Devices}, ``{ROCm 7.2.3} release notes,'' 2026. [Online].
  Available:
  \url{https://rocm.docs.amd.com/en/docs-7.2.3/about/release-notes.html}
\BIBentrySTDinterwordspacing

\bibitem{vidal2003efficient}
G.~Vidal, ``Efficient classical simulation of slightly entangled quantum
  computations,'' \emph{Phys. Rev. Lett.}, vol.~91, no.~14, p. 147902, 2003.

\bibitem{gottesman1998heisenberg}
D.~Gottesman, ``The heisenberg representation of quantum computers,'' in
  \emph{Group22: Proceedings of the XXII International Colloquium on Group
  Theoretical Methods in Physics}.\hskip 1em plus 0.5em minus 0.4em\relax
  International Press, 1999, pp. 32--43.

\bibitem{aaronson2004improved}
S.~Aaronson and D.~Gottesman, ``Improved simulation of stabilizer circuits,''
  \emph{Phys. Rev. A}, vol.~70, no.~5, p. 052328, 2004.

\bibitem{suzuki2021qulacs}
Y.~Suzuki, Y.~Kawase, Y.~Masumura, Y.~Hiraga, M.~Nakadai, J.~Chen, K.~M.
  Nakanishi, K.~Mitarai, R.~Imai, S.~Tamiya, T.~Yamamoto, T.~Yan, T.~Kawakubo,
  Y.~O. Nakagawa, Y.~Ibe, Y.~Zhang, H.~Yamashita, H.~Yoshimura, A.~Hayashi, and
  K.~Fujii, ``{Qulacs}: a fast and versatile quantum circuit simulator for
  research purpose,'' \emph{Quantum}, vol.~5, p. 559, 2021.

\bibitem{jones2019quest}
T.~Jones, A.~Brown, I.~Bush, and S.~C. Benjamin, ``{QuEST} and high performance
  simulation of quantum computers,'' \emph{Scientific Reports}, vol.~9, no.~1,
  p. 10736, 2019.

\bibitem{juqcs2018}
H.~De~Raedt, F.~Jin, D.~Willsch, M.~Willsch, N.~Yoshioka, N.~Ito, S.~Yuan, and
  K.~Michielsen, ``Massively parallel quantum computer simulator, eleven years
  later,'' \emph{Computer Physics Communications}, vol. 237, pp. 47--61, 2019.

\bibitem{li2023nwqsim}
A.~Li, B.~Fang, C.~Granade, G.~Prawiroatmodjo, B.~Heim, M.~Roetteler, and
  S.~Krishnamoorthy, ``{SV-Sim}: Scalable {PGAS}-based state vector simulation
  of quantum circuits,'' in \emph{Proceedings of the International Conference
  for High Performance Computing, Networking, Storage and Analysis}, 2021, pp.
  1--14.

\bibitem{xu2024atlas}
M.~Xu, S.~Cao, X.~Miao, U.~A. Acar, and Z.~Jia, ``{Atlas}: Hierarchical
  partitioning for quantum circuit simulation on {GPUs},'' in \emph{SC24:
  International Conference for High Performance Computing, Networking, Storage
  and Analysis}, 2024, pp. 1--17.

\bibitem{gidney2021stim}
C.~Gidney, ``{Stim}: a fast stabilizer circuit simulator,'' \emph{Quantum},
  vol.~5, p. 497, 2021.

\bibitem{quimb2018}
J.~Gray, ``{quimb}: a {Python} package for quantum information and many-body
  calculations,'' \emph{Journal of Open Source Software}, vol.~3, no.~29, p.
  819, 2018.

\bibitem{tensorcircuit2023}
S.-X. Zhang, J.~Allcock, Z.-Q. Wan, S.~Liu, J.~Sun, H.~Yu, X.-H. Yang, J.~Qiu,
  Z.~Ye, Y.-Q. Chen, C.-K. Lee, Y.-C. Zheng, S.-K. Jian, H.~Yao, C.-Y. Hsieh,
  and S.~Zhang, ``{TensorCircuit}: a quantum software framework for the {NISQ}
  era,'' \emph{Quantum}, vol.~7, p. 912, 2023.

\bibitem{qsimv022}
\BIBentryALTinterwordspacing
{Google Quantum AI}, ``{qsim} v0.22.0 documentation: {AMD GPU} and {MPS}
  support,'' Software documentation, 2026, accessed 2026-07-15. [Online].
  Available: \url{https://github.com/quantumlib/qsim/tree/v0.22.0/docs}
\BIBentrySTDinterwordspacing

\bibitem{cross2022openqasm}
A.~W. Cross, A.~Javadi-Abhari, T.~Alexander, N.~de~Beaudrap, L.~S. Bishop,
  S.~Heidel, C.~A. Ryan, P.~Sivarajah, J.~Smolin, J.~M. Gambetta, and B.~R.
  Johnson, ``{OpenQASM 3}: A broader and deeper quantum assembly language,''
  \emph{ACM Transactions on Quantum Computing}, vol.~3, no.~3, pp. 1--50, 2022.

\bibitem{springer2018tccg}
P.~Springer and P.~Bientinesi, ``Design of a high-performance {GEMM}-like
  tensor--tensor multiplication,'' \emph{ACM Transactions on Mathematical
  Software}, vol.~44, no.~3, pp. 1--29, 2018.

\bibitem{belady1966study}
L.~A. Bel{\'a}dy, ``A study of replacement algorithms for a virtual-storage
  computer,'' \emph{IBM Systems Journal}, vol.~5, no.~2, pp. 78--101, 1966.

\bibitem{mattson1970evaluation}
R.~L. Mattson, J.~Gecsei, D.~R. Slutz, and I.~L. Traiger, ``Evaluation
  techniques for storage hierarchies,'' \emph{IBM Systems Journal}, vol.~9,
  no.~2, pp. 78--117, 1970.

\bibitem{smelyanskiy2016qhipster}
\BIBentryALTinterwordspacing
M.~Smelyanskiy, N.~P.~D. Sawaya, and A.~Aspuru-Guzik, ``{qHiPSTER}: the quantum
  high performance software testing environment,'' arXiv preprint
  arXiv:1601.07195, 2016. [Online]. Available:
  \url{https://arxiv.org/abs/1601.07195}
\BIBentrySTDinterwordspacing

\bibitem{haner2017half}
T.~H{\"a}ner and D.~S. Steiger, ``0.5 petabyte simulation of a 45-qubit quantum
  circuit,'' in \emph{Proceedings of the International Conference for High
  Performance Computing, Networking, Storage and Analysis (SC '17)}.\hskip 1em
  plus 0.5em minus 0.4em\relax ACM, 2017, pp. 1--10.

\bibitem{zhang2021hyquas}
C.~Zhang, Z.~Song, H.~Wang, K.~Rong, and J.~Zhai, ``{HyQuas}: hybrid
  partitioner based quantum circuit simulation system on {GPU},'' in
  \emph{Proceedings of the ACM International Conference on Supercomputing (ICS
  '21)}, 2021, pp. 443--454.

\bibitem{farhi2014qaoa}
\BIBentryALTinterwordspacing
E.~Farhi, J.~Goldstone, and S.~Gutmann, ``A quantum approximate optimization
  algorithm,'' arXiv preprint arXiv:1411.4028, 2014. [Online]. Available:
  \url{https://arxiv.org/abs/1411.4028}
\BIBentrySTDinterwordspacing

\bibitem{peruzzo2014variational}
A.~Peruzzo, J.~McClean, P.~Shadbolt, M.-H. Yung, X.-Q. Zhou, P.~J. Love,
  A.~Aspuru-Guzik, and J.~L. O'Brien, ``A variational eigenvalue solver on a
  photonic quantum processor,'' \emph{Nature Communications}, vol.~5, no.~1, p.
  4213, 2014.

\bibitem{li2023qasmbench}
A.~Li, S.~Stein, S.~Krishnamoorthy, and J.~Ang, ``{QASMBench}: A low-level
  quantum benchmark suite for {NISQ} evaluation and simulation,'' \emph{ACM
  Transactions on Quantum Computing}, vol.~4, no.~2, pp. 1--26, 2023.

\bibitem{villalonga2020establishing}
B.~Villalonga, D.~Lyakh, S.~Boixo, H.~Neven, T.~S. Humble, R.~Biswas, E.~G.
  Rieffel, A.~Ho, and S.~Mandr{\`a}, ``Establishing the quantum supremacy
  frontier with a 281 {Pflop/s} simulation,'' \emph{Quantum Science and
  Technology}, vol.~5, no.~3, p. 034003, 2020.

\bibitem{deraedt2007}
K.~De~Raedt, K.~Michielsen, H.~De~Raedt, B.~Trieu, G.~Arnold, M.~Richter,
  T.~Lippert, H.~Watanabe, and N.~Ito, ``Massively parallel quantum computer
  simulator,'' \emph{Computer Physics Communications}, vol. 176, no.~2, pp.
  121--136, 2007.

\bibitem{zhang2022uniq}
C.~Zhang, H.~Wang, Z.~Ma, L.~Xie, Z.~Song, and J.~Zhai, ``{UniQ}: a unified
  programming model for efficient quantum circuit simulation,'' in
  \emph{Proceedings of the International Conference for High Performance
  Computing, Networking, Storage and Analysis (SC '22)}, 2022, pp. 1--16.

\bibitem{tabuchi2024lazy}
\BIBentryALTinterwordspacing
Y.~Teranishi, S.~Hiraoka, W.~Mizukami, M.~Okita, and F.~Ino, ``Lazy qubit
  reordering for accelerating parallel state-vector-based quantum circuit
  simulation,'' \emph{ACM Transactions on Quantum Computing}, vol.~6, no.~4,
  pp. 1--33, 2025. [Online]. Available: \url{https://arxiv.org/abs/2410.04252}
\BIBentrySTDinterwordspacing

\bibitem{zhong2025q2chemistry}
\BIBentryALTinterwordspacing
G.~Zhong, Y.~Fan, and Z.~Li, ``Scalable parallel simulation of quantum circuits
  on {CPU} and {GPU} systems,'' arXiv preprint arXiv:2509.04955, 2025.
  [Online]. Available: \url{https://arxiv.org/abs/2509.04955}
\BIBentrySTDinterwordspacing

\bibitem{li2020dmsim}
A.~Li, O.~Subasi, X.~Yang, and S.~Krishnamoorthy, ``Density matrix quantum
  circuit simulation via the {BSP} machine on modern {GPU} clusters,'' in
  \emph{SC20: International Conference for High Performance Computing,
  Networking, Storage and Analysis}, 2020, pp. 1--15.

\bibitem{nwqsimsoftware}
\BIBentryALTinterwordspacing
{Pacific Northwest National Laboratory}, ``{NWQ-Sim}: Northwest quantum circuit
  simulation environment,'' Software repository, 2026, accessed 2026-07-15.
  [Online]. Available: \url{https://github.com/pnnl/NWQ-Sim}
\BIBentrySTDinterwordspacing

\bibitem{tabuchi2023mpiqulacs}
A.~Tabuchi, S.~Imamura, M.~Yamazaki, T.~Honda, A.~Kasagi, H.~Nakao,
  N.~Fukumoto, and K.~Nakashima, ``{mpiQulacs}: A scalable distributed quantum
  computer simulator for {ARM}-based clusters,'' in \emph{2023 IEEE
  International Conference on Quantum Computing and Engineering (QCE)}, 2023,
  pp. 959--969.

\bibitem{steiger2018projectq}
D.~S. Steiger, T.~H{\"a}ner, and M.~Troyer, ``{ProjectQ}: an open source
  software framework for quantum computing,'' \emph{Quantum}, vol.~2, p.~49,
  2018.

\bibitem{mindquantum_docs}
\BIBentryALTinterwordspacing
{MindSpore}, ``{MindSpore Quantum Documentation},'' n.d., accessed: Jul. 14,
  2026. [Online]. Available:
  \url{https://www.mindspore.cn/mindquantum/docs/en/stable/index.html}
\BIBentrySTDinterwordspacing

\bibitem{yao_docs}
X.-Z. Luo, J.-G. Liu, P.~Zhang, and L.~Wang, ``{Yao.jl}: Extensible, efficient
  framework for quantum algorithm design,'' \emph{Quantum}, vol.~4, p. 341,
  2020.

\bibitem{kokkos2014}
H.~C. Edwards, C.~R. Trott, and D.~Sunderland, ``{Kokkos}: enabling manycore
  performance portability through polymorphic memory access patterns,''
  \emph{Journal of Parallel and Distributed Computing}, vol.~74, no.~12, pp.
  3202--3216, 2014.

\bibitem{openmp51spec}
\BIBentryALTinterwordspacing
{OpenMP Architecture Review Board}, ``{OpenMP} application programming
  interface, version 5.1,'' 2020. [Online]. Available:
  \url{https://www.openmp.org/spec-html/5.1/openmp.html}
\BIBentrySTDinterwordspacing

\bibitem{khronos2020sycl}
\BIBentryALTinterwordspacing
{Khronos SYCL Working Group}, ``{SYCL} 2020 specification, revision 11,'' 2025.
  [Online]. Available:
  \url{https://registry.khronos.org/SYCL/specs/sycl-2020/html/sycl-2020.html}
\BIBentrySTDinterwordspacing

\bibitem{vanzee2015blis}
F.~G. Van~Zee and R.~A. van~de Geijn, ``{BLIS}: a framework for rapidly
  instantiating {BLAS} functionality,'' \emph{ACM Transactions on Mathematical
  Software}, vol.~41, no.~3, pp. 14:1--14:33, 2015.

\end{thebibliography}

\end{document}